\documentclass[english,runningheads]{llncs}

\usepackage[T1]{fontenc}
\usepackage[utf8]{inputenc}
\usepackage{array}
\usepackage{setspace}
\usepackage[format=hang]{caption}
\usepackage{amsmath}
\usepackage{amssymb}
\usepackage{graphicx}
\usepackage{float}
\usepackage{xspace}
\usepackage{booktabs,bold-extra}
\usepackage[inline]{enumitem}
\usepackage[title]{appendix}
\usepackage{ifthen}
\usepackage{ifmtarg}
\usepackage[nolist]{acronym}
\usepackage{stmaryrd} % for \varoplus, \ovee
\usepackage{csquotes}
\usepackage[english]{babel}
\usepackage{listings}
\usepackage[dvipsnames]{xcolor}
\usepackage{adjustbox}
\usepackage[lighttt]{lmodern}
\usepackage{mathtools}
%\usepackage{anyfontsize}
%\usepackage{soul}

% TODOs, change notes, etc.
\usepackage[textsize=scriptsize]{todonotes}
\usepackage{changes}
%\definechangesauthor[name={Dominic}, color=orange]{ds}
%\setremarkmarkup{(#2)}

%% Save the class definition of \subparagraph
\let\llncssubparagraph\subparagraph
%% Provide a definition to \subparagraph to keep titlesec happy
\let\subparagraph\paragraph
%% Load titlesec
\usepackage[compact]{titlesec}
%% Revert \subparagraph to the llncs definition
\let\subparagraph\llncssubparagraph

\usepackage[%ps2pdf,%
  linktocpage,%
  %bookmarks,%
  %bookmarksopen,%
  %bookmarksnumbered,%
  pdfborderstyle={/S/S/W 0},%
  %bookmarksdepth=2,%
]{hyperref}

\usepackage{keybook}
\usepackage{keyjavadl}
\usepackage{keylistings}

\newcommand{\javaInl}[2][]{\text{\lstinline[mathescape,#1]!#2!}}
\newcommand{\openLoopScope}    [1][\pv{x}]{\openLoopScopeWithCnt[#1]{}}
\newcommand{\closeLoopScope}   [1][\pv{x}]{\closeLoopScopeWithCnt[#1]{}}

\newcommand\jParOp{\javaInl{(}\xspace}

\newcommand{\openLoopScopeWithCnt} [2][\pv{x}]{\ensuremath{\prescript{}{}  {\circlearrowright}^{#2}}_{#1}   \xspace}
\newcommand{\closeLoopScopeWithCnt}[2][\pv{x}]{\ensuremath{\prescript{#2}{#1}{\hspace{-1pt}\circlearrowleft}}\xspace}
\newcommand{\reducedstrut}{\vrule width 0pt height .9\ht\strutbox depth .9\dp\strutbox\relax}
\newcommand{\highlights}[1]{%
  \begingroup
  \setlength{\fboxsep}{0pt}%  
  \colorbox{gray!50}{\reducedstrut$#1$\/}%
  \endgroup
}

% Avoid overfull lines
\pretolerance=600 % Max: 10000

\newcommand{\att}[1][]{\ensuremath{\javaInl{attempt}_{#1}}\xspace}
\newcommand{\cont}[0]{\javaInl{continuation}\xspace}
\newcommand{\attCont}[0]{\att-\cont}
\newcommand{\combinedAtt}[0]{\att[\java{l}^?]}

\newcommand{\unlabeledAttempt}[3][]{\ensuremath{\att[#1] \javaInl{\ \{\ } #2 \javaInl{\ \}\ } \cont \javaInl{\ \{\ } #3 \javaInl{\ \}}}}
\newcommand{\attempt}[3][\java{l}]{\unlabeledAttempt[#1]{#2}{#3}}
\newcommand{\combinedAttempt}[3][\java{l}]{\attempt[#1^{?}]{#2}{#3}}

\newcommand{\labels}[0]{\ensuremath{\mathcal{L}}\xspace}
\newcommand{\normal}[0]{\ensuremath{\mathcal{N}}\xspace}
\newcommand{\breaks}[1]{\ensuremath{\mathcal{B}_{#1}}\xspace}
\newcommand{\continues}[1]{\ensuremath{\mathcal{C}_{#1}}\xspace}
\newcommand{\abrupt}[0]{\ensuremath{\mathcal{A}}\xspace}
\newcommand{\completionTypes}[0]{\ensuremath{\mathcal{T}}\xspace}

\newcommand{\halt}{\ensuremath{\downarrow}\xspace}

\newcommand{\ftrue}[0]{\mathit{true}}

\title{Treating \java{for}-Loops as\\First-Class Citizens in Proofs%
\thanks{\scriptsize This work was funded by the Hessian LOEWE initiative within the Software-Factory 4.0 project.}}
\author{Nathan Wasser \and Dominic Steinh\"ofel\orcidID{0000-0003-4439-7129}}
\institute{Technische Universit\"at Darmstadt, Department of Computer Science,\\
  64289 Darmstadt, Germany\\
\email{$\{$wasser,steinhoefel$\}$@cs.tu-darmstadt.de}\\
\url{https://www.informatik.tu-darmstadt.de/se}}
\date{\today}

\begin{document}

\maketitle

\begin{abstract}
Indexed loop scopes have been shown to be a helpful tool in creating sound loop invariant rules in dynamic logic for programming languages with abrupt completion, such as Java.
These rules do not require program transformation of the loop body, as other approaches to dealing with abrupt completion do.
However, indexed loop scopes were designed specifically to provide a loop invariant rule for \javaInl{while} loops and work rather opaquely. 
Here we propose replacing indexed loop scopes with a more transparent solution, which also lets us extend this idea from \javaInl{while} loops to \javaInl{for} loops. We further present sound loop unrolling rules for \javaInl{while}, \javaInl{do} and \javaInl{for} loops, which require neither program transformation of the loop body, nor the use of nested modalities.
This approach allows \javaInl{for} loops to be treated as first-class citizens in proofs -- rather than the usual approach of transforming \javaInl{for} loops into \javaInl{while} loops -- which makes semi-automated proofs more transparent and easier to follow for the user, whose interactions may be required in order to close the proofs.

\keywords{Theorem proving \and Dynamic logic \and Loop invariants}
\end{abstract}

\section{Introduction}

Sound program transformation in real world programming languages such as Java~\cite{JLS} is not easy, with potential pitfalls hiding in constructs such as Java's \javaInl{try}-\javaInl{finally} statement.
Thus, when reasoning about programs it is useful to avoid complex program transformations whenever possible.

\emph{Indexed loop scopes} were introduced in~\cite{Wasser16} to allow a sound loop invariant rule (which does not require program transformation of the loop body) in dynamic logic~\cite{Harel00} for \javaInl{while} loops containing statements which complete abruptly~\cite[Chapter~14.1]{JLS}. In~\cite{SteinhofelW17} it was shown that an implementation of this new loop invariant rule in \KeY\footnote{\scriptsize\url{https://www.key-project.org/}}~\cite{KeYBook2016} also decreases proof size when compared to the existing~rule.

However, indexed loop scopes were tailored specifically to treat the case of applying a loop invariant to a \javaInl{while} loop. While we made attempts to re-use indexed loop scopes for loop unrolling~\cite{Wasser16} and application to \javaInl{for} loops~\cite{WasserS19}, these were suboptimal.

In this paper we refine the concept of the loop scope, splitting it into two distinct parts: the \javaInl{attempt}-\javaInl{continuation} statement providing a non-active prefix~\cite{KeYBook2016} for loop bodies; and the logic to determine whether the loop invariant or the original formula should be proven, which was rather opaquely contained in symbolic execution rules for loop scopes.
Splitting these orthogonal concerns allows using an \javaInl{attempt}-\javaInl{continuation} statement in simple loop unrolling rules for \javaInl{while}, \javaInl{do} and \javaInl{for} loops, which avoid program transformation of the loop body and do not require the use of nested modalities, as the approach in~\cite{Wasser16} did for \javaInl{while} loop unrolling.
It also allows for a more transparent loop invariant rule for \javaInl{while} loops and we can introduce a transparent loop invariant rule for \javaInl{for} loops, which also both avoid program transformation of the loop~body.

With this, we can treat \javaInl{for} loops fully as first-class citizens in proofs, without the need to transform them into \javaInl{while} loops, which involves non-trivial program transformation.

Section~\ref{sec:background} provides background on dynamic logic and JavaDL in particular, as well as on indexed loop scopes and the loop invariant rule using them.
In Section~\ref{sec:loop-unrolling} we introduce the \javaInl{attempt}-\javaInl{continuation} block and new specialized loop unrolling rules for each loop type.
We propose new specialized loop invariant rules for \javaInl{while} and \javaInl{for} loops in Section~\ref{sec:loop-inv-for},
while Section~\ref{sec:evaluation} contains an evaluation of previous work and the changes proposed in this paper.
In Section~\ref{sec:related-work} we compare this approach with related work.
Finally, we conclude and offer ideas for future work in Section~\ref{sec:conclusion}.

\section{Background}\label{sec:background}

One approach to \emph{deductive software verification}~\cite{Filliatre11} which has been quite useful is \emph{dynamic logic}~\cite{Harel00}.
The idea behind dynamic logic is to contain the program under test within the logic itself by use of dynamic logic \emph{modalities}.
Classically, for all formulae $\phi$ and all programs \java{p} the formula $\dlboxf{p}{\phi}$ holds iff $\phi$ holds in all terminating states reachable by executing \java{p}.
The dual is defined as: $\dlf{p}{\phi} \equiv \neg (\dlboxf{p}{(\neg \phi)})$.
Initially proposed using \emph{Kleene's regular expression operators}
%~\cite{?}\todo{citation}
as programming language, it has been extended to various other programming languages, in particular to Java~\cite{JLS} in \emph{Java dynamic logic} (JavaDL)~\cite{BeckertHS07}.
While Kleene's regular expression operators contain complexities such as non-determinism, which makes reasoning about them far from simple, there is no concept of \emph{abrupt completion}\footnote{In Java, statements can complete abruptly due to \javaInl{break}s, \javaInl{continue}s and \javaInl{return}s, while both statements and expressions can complete abruptly due to thrown exceptions~\cite[Chapter~14.1]{JLS}.}: either an operation completes normally or blocks.
Additionally, program elements in Java can ``catch'' these abrupt completions and execute different code due to them, then either complete normally or complete abruptly for the same or a different reason.
Thus, it is not as simple a matter to give meaning to $\dlbox{\javaInl{while (e) st}}{\phi}$ for the \javaInl{while} loop of a Java program,
while the axiom for \javaInl{while} in a simple WHILE language can be expressed through \emph{loop unrolling}:
\begin{align*}
 \dlboxf{WHILE (e) st}{\phi}~\equiv~\dlboxf{IF e \{ st WHILE (e) st \}}{\phi}
\end{align*}
One solution, proposed for example in~\cite{Schlager07}, would be to introduce new modalities for each type of completion.

\begin{definition}[Set of all labels, sets of completion types]
 \labels is an infinite set of labels.
 The set of completion types \completionTypes and its subsets \normal~(\emph{normal}), \abrupt~(\emph{abrupt}), \breaks{l}~(\emph{breaking}) and \continues{l}~(\emph{continuing completion types}) are given as:
 \begin{align*}
  \normal = \{\mathit{normal}\},~ \abrupt = \{\mathit{break},\mathit{continue}\} \cup \bigcup_{l \in \labels}\{\mathit{break}_{l},\mathit{continue}_{l}\},~   \completionTypes = \normal \cup \abrupt, \\
  \forall l \in \labels.~\breaks{l} = \{\mathit{break},\mathit{break}_l\}, \qquad 
  \forall l \in \labels.~\continues{l} = \{\mathit{normal},\mathit{continue},\mathit{continue}_l\} \ 
 \end{align*}
\end{definition}
We write $\dlboxf{p}{_{S}~\phi}$ as short form for $\bigwedge_{t \in S} (\dlboxf{p}{_{t}~\phi)}$.
The axioms given in this paper hold for all $l \in \labels$.
Fig.~\ref{fig:axioms-simple} contains relatively straightforward axioms for some simple Java statements, as well as the \javaInl{try}-\javaInl{finally} statement.
\begin{figure}
\begin{minipage}{.45\textwidth}
\small
\begin{align}
 \dlboxf{;}{_\normal~\phi} &~\equiv~ \phi \label{eq:axiom-skip-normal} \\
 \dlbox{\javaInl{break;}}{_{\mathit{break}}~\phi} &~\equiv~ \phi \\
 \dlbox{\javaInl{continue;}}{_{\mathit{continue}}~\phi} &~\equiv~ \phi \\
 \dlbox{\javaInl{break\ }l\java{;}}{_{\mathit{break}_l}~\phi} &~\equiv~ \phi \\
 \dlbox{\javaInl{continue\ }l\java{;}}{_{\mathit{continue}_l}~\phi} &~\equiv~ \phi
\end{align}
\end{minipage}
\hfill\begin{minipage}{.45\textwidth}
\small
\begin{align}
 \dlboxf{;}{_\abrupt~\phi} \label{eq:axiom-skip-abrupt} \\
 \dlbox{\javaInl{break;}}{_{\completionTypes \setminus \{ \mathit{break} \}}~\phi} \\
 \dlbox{\javaInl{continue;}}{_{\completionTypes \setminus \{ \mathit{continue} \}}~\phi} \\
 \dlbox{\javaInl{break\ }l\java{;}}{_{\completionTypes \setminus \{ \mathit{break}_l \}}~\phi} \\
 \dlbox{\javaInl{continue\ }l\java{;}}{_{\completionTypes \setminus \{ \mathit{continue}_l \}}~\phi}
\end{align}
\end{minipage}
\small
\begin{align}
 \dlboxf{st1 st2}{_\normal~\phi} &~\equiv~ \dlboxf{st1}{_\normal\dlboxf{st2}{_\normal~\phi}} \label{eq:axiom-seq-normal} \\
 \forall a \in \abrupt.~\dlboxf{st1 st2}{_{a}~\phi} &~\equiv~ \dlboxf{st1}{_{a}~\phi}~\land~\dlboxf{st1}{_\normal\dlboxf{st2}{_{a}~\phi}} \label{eq:axiom-seq-abrupt} \\
%\end{align*}%
%\begin{align}
 %\dlboxf{if (e) st}_{\normal}~\phi &~\equiv~ \dlboxf{b = e;}_\normal((\java{b} \rightarrow \dlboxf{st}_{\normal}~\phi)~\land~(\neg \java{b} \rightarrow \phi)) \label{eq:axiom-if-normal} \\
 %\forall a \in \abrupt.~\dlboxf{if (e) st}_{a}~\phi &~\equiv~ \dlboxf{b = e;}_\normal(\java{b} \rightarrow \dlboxf{st}_{a}~\phi) \label{eq:axiom-if-abrupt}
 \forall t \in \completionTypes.~\dlbox{\javaInl{if (e) st1 else st2}}{_{t}~\phi} &~\equiv~ \dlboxf{b = e;}{_\normal((\java{b} \rightarrow \dlboxf{st1}{_{t}~\phi}) \land (\neg\java{b} \rightarrow \dlboxf{st2}{_{t}~\phi}))} \label{eq:axiom-if}
\end{align}%
\vspace{-0.5cm}
\begin{align}
  \dlbox{\javaInl{try \{ p \} finally \{ q \}}}_\normal~\phi 
 &~\equiv~
 \dlboxf{p}_\normal \dlboxf{q}_\normal~\phi  \\
 \forall a \in \abrupt.~ \dlbox{\javaInl{try \{ p \} finally \{ q \}}}_a~\phi 
 &~\equiv~ \dlboxf{p}_a \dlboxf{q}_\normal~\phi ~\land~ \dlboxf{p}_\completionTypes \dlboxf{q}_a~\phi 
\end{align}%
\caption{Axioms for \emph{skip}, \javaInl{break}s, \javaInl{continue}s, \emph{sequence}, \javaInl{if} and \javaInl{try}-\javaInl{finally}}%
\label{fig:axioms-simple}%
\end{figure}

\noindent
We write ``\javaInl{if (e) st}'' as short form for ``\javaInl{if (e) st else ;}''.

Fig.~\ref{fig:axioms-while} contains axioms for the \javaInl{while} statement.
The axiom~\eqref{eq:axiom-while-normal} expresses that the loop can: (1.) continue normally, or by a matching \javaInl{continue} statement; and (2.) be exited normally or by a matching \javaInl{break} statement. Axiom \eqref{eq:axiom-while-other-loop} expresses that the loop can complete abruptly by a labeled \javaInl{break} or \javaInl{continue} that does not match the loop label.
Axiom~\eqref{eq:axiom-while-same-loop} expresses that a \javaInl{while} loop can never complete abruptly due to a matching \javaInl{break} or \javaInl{continue} statement.
\begin{figure}
{\small
\begin{align}
\dlbox{l\javaInl{: while (e) st}}_{\normal}~\phi
  ~\equiv~ \dlboxf{b = e;}_{\normal}(&(\neg \java{b} \rightarrow \phi)~\land~ \nonumber \\
 &(~~\java{b} \rightarrow (\dlboxf{st}{_{\breaks{l}}~\phi} ~\land \nonumber \\
 &\qquad\quad\ \, \dlboxf{st}{_{\continues{l}} \dlbox{l\javaInl{: while (e) st}}{_{\normal}~\phi}}))) \label{eq:axiom-while-normal} \\
  \forall t \in \bigcup_{k \in \labels \setminus \{l\}}\{\mathit{break}_k, \mathit{continue}_k\}. \qquad\ \, \nonumber \\
  \dlbox{l\javaInl{: while (e) st}}_t~\phi ~\equiv~ \dlboxf{b = e;}_{\normal}(&\java{b} \rightarrow (\dlboxf{st}{_{t}~\phi}~\land~\dlboxf{st}_{\continues{l}}\dlbox{l\javaInl{: while (e) st}}_{t}~\phi)) \label{eq:axiom-while-other-loop} \\
  \dlbox{l\javaInl{: while (e) st}}&_{\{\mathit{break}, \mathit{continue}, \mathit{break}_l, \mathit{continue}_l\}}~\phi \label{eq:axiom-while-same-loop}
\end{align}%
}%
\caption{Axioms for \javaInl{while}}%
\label{fig:axioms-while}%
\end{figure}

\noindent
While this approach of adding many new modalities provides a sound theoretical grounding, 
%it
a calculus directly using these axioms as rules
is problematic in practice
(in particular when using \emph{symbolic execution}~\cite{King76}), as it becomes quite complex very quickly. It should be pointed out that the axioms for the modalities covering exception throwing and returning from a method are more involved than the somewhat simpler modalities dealing with \javaInl{break}s and \javaInl{continue}s.
%
%Additionally, dealing with Java's \java{try-finally} semantics is fairly complex, as \emph{all} abrupt completion is caught temporarily before executing the \javaInl{finally} block and then completing abruptly for the caught reason (unless the \javaInl{finally} block completed abruptly already).
%
Additionally, modalities need to be analyzed multiple times, as can be seen by applying~\eqref{eq:axiom-while-normal} to $\dlbox{\javaInl{l: while (e) \{ st1 st2 \}}}{_{\mathit{normal}}~\phi}$ and then simplifying with~\eqref{eq:axiom-seq-normal} and \eqref{eq:axiom-seq-abrupt}, leading to three separate occurences of $\dlboxf{b = e;}{_{\mathit{normal}}\dlboxf{st1}{_{\mathit{normal}}(\cdot)}}$.
Using symbolic execution, this involves multiple symbolic executions of the exact same program fragment in the same state with the same context, which is a waste of resources.

For these and other reasons, the authors of JavaDL chose to instead keep track of the context \emph{within the program part} of the modality, rather than creating additional modality types.
To this end they defined \emph{legal program fragments}~\cite{KeYBook2016}, which may occur in the program part of a modality:

\begin{definition}
 Let Prg be a Java program. A \emph{legal program fragment} \java{p} is a sequence of Java statements, where there are local variables $\java{a}_1, \ldots , \java{a}_n$ of Java types $\java{T}_1, \ldots , \java{T}_n$ such that extending Prg with an additional class \java{C}
 yields again a legal program according to the rules of the Java language specification~\cite{JLS}, except that \java{p} may refer to fields, methods and classes that are not visible in \java{C}, and \java{p} may contain \emph{extended Java statements} in addition to normal Java statements; where the class \java{C} is declared:
 \begin{center}
 \begin{minipage}{.8\textwidth}
 \begin{lstlisting}[mathescape]
  public class C {
    public static void m(T$_1$ a$_1$, $\ldots$, T$_n$ a$_n$) { p }
  }
 \end{lstlisting}
 \end{minipage}
 \end{center}
\end{definition}
In~\cite{KeYBook2016} the only \emph{extended Java statement} allowed was the \emph{method-frame}, a way to track the context of within which method call (of which object or class) a program fragment was to be executed.
This allows for method calls within a program fragment to be replaced with method-frames containing their expanded method bodies.

\begin{definition}
 The set of all \emph{JavaDL formulae} is defined as the smallest set containing all:
 \begin{itemize}
  \item first-order formulae,
  \item $\dlboxf{p}{\phi}$, where \java{p} is a legal program fragment and $\phi$ is a JavaDL formula, and
  \item $\upl \mathcal{U} \upr \phi$, where $\phi$ is a JavaDL formula and $\mathcal{U}$ is an \emph{update}.
 \end{itemize}
\end{definition}
\begin{definition}
 An \emph{update} $\mathcal{U}$ expresses state changes. An \emph{elementary update} $\java{x} \upd t$ represents the states where the variable \java{x} is set to the value of the term $t$, while a \emph{parallel update} $\mathcal{U}_1~\|~\mathcal{U}_2$ expresses both updates simultaneously (with a last-wins to resolve conflicts).
 Updates can be applied to terms ($\upl \mathcal{U} \upr t$), formulae ($\upl \mathcal{U} \upr \phi$) and other updates ($\upl \mathcal{U}_1 \upr \mathcal{U}_2$), creating new terms, formulae and updates representing the changed state.
\end{definition}
A legal program fragment has the form ``\java{$\pi$ st $\omega$}'', where the \emph{non-active prefix} $\pi$ initially consisted only of an arbitrary sequence of opening braces ``\java{\{}'', labels, beginnings ``\javaInl{method-frame(}$\ldots$\javaInl{) \{}'' of method invocation statements, and beginnings ``\javaInl{try \{}'' of \javaInl{try}-(\javaInl{catch})-\javaInl{finally} statements; \java{st} is the \emph{active statement}; and $\omega$ is the rest of the program, in particular including closing braces corresponding to the opening braces in $\pi$.
Certain active statements can interact with the non-active prefix.

JavaDL uses a \emph{sequent calculus} in which rules consist of one conclusion and any number of premisses, and are applied bottom-up.
In addition to first-order logic rules, there are \emph{symbolic execution rules}, which operate on the active statement inside a legal program fragment. 
\begin{example}
 We consider:
 \(
   \upl \java{x} \upd 1 \upr\dlbox{\javaInl{l : \{ y = x; break l; y = 0; \}}}(y \neq 0)
 \)
 
 Here ``\java{l : \{}'' is the non-active prefix, while ``\java{y = x;}'' is the active statement.
 JavaDL contains a symbolic execution rule to execute a simple assignment, which leads to the formula
 $\upl \java{x} \upd 1 \upr \upl y \upd x \upr \dlbox{\javaInl{l : \{ break l; y = 0; \}}}(y \neq 0)$.
 Now the active statement ``\javaInl{break l;}'' interacts with the non-active prefix, removing the labeled block completely and leaving the formula  $\upl \java{x} \upd 1 \upr \upl y \upd x \upr \dlbox{}(y \neq 0)$ which is equivalent to $\upl \java{x} \upd 1 \upr \upl y \upd x \upr (y \neq 0)$. Applying the updates gives first $\upl \java{x} \upd 1 \upr (x \neq 0)$ and then $1 \neq 0$, which obviously holds.
 The update $\java{x} \upd 1$ could have alternatively been applied to the update $y \upd x$, yielding the parallel update $\java{x} \upd 1~\|~y \upd \upl \java{x} \upd 1 \upr x$, which simplifies to $\java{x} \upd 1~\|~y \upd 1$. Applying this update to $(y \neq 0)$ also leads to the formula $(1 \neq 0)$.
 \begin{center}
\begin{minipage}{.6\textwidth}
 \[\small
  \seqRule{assignment}%
 {\sequent{}{\upl \mathcal{U} \upr \upl \java{x} \upd se \upr \dlboxf{$\pi$ $\omega$}{\phi}}
 }%
 {\sequent{}{\upl \mathcal{U} \upr \dlboxf{$\pi$ x = $se$; $\omega$}{\phi}}}
 \]
\end{minipage}
\end{center}
 \begin{center}
\begin{minipage}{.8\textwidth}
 \[\small
  \seqRule{blockBreak}%
 {\sequent{}{\upl \mathcal{U} \upr \dlboxf{$\pi$ $\omega$}{\phi}}
 }%
 {\sequent{}{\upl \mathcal{U} \upr \dlboxf{$\pi$ $l_1, \ldots, \java{l},\ldots l_n$: \{ break l; p \} $\omega$}{\phi}}}
 \]
\end{minipage}
\end{center}
\end{example}
Initially there was no designated non-active prefix that allowed interaction with unlabeled \javaInl{break}s as well as labeled and unlabeled \javaInl{continue}s, which can occur in loop bodies.
This makes a simple loop unrolling rule impossible,
therefore loop bodies were transformed when unrolling the loop or when applying the loop invariant rule directly to a \javaInl{while} loop.
With a \javaInl{for} loop, the entire loop was first transformed, creating a \javaInl{while} loop, with a further program transformation of the loop body when dealing with said \javaInl{while} loop.
However, sound program
transformation rules for a complex language such as Java lead to very opaque program fragments, which have next to no relation to the original program, as can be seen in Examples~\ref{ex:for-to-while} and \ref{ex:while-transform}.
\begin{figure*}[b]
 \begin{minipage}{.45\textwidth}
 \begin{example}\label{ex:for-to-while}
 Sound program transformation of the \javaInl{for} loop in Listing~\ref{lst:ex_for} leads to the \javaInl{while} loop in Listing~\ref{lst:ex_while}.
 
 \begin{center}
 \begin{minipage}{.85\textwidth}
 \small
 \begin{lstlisting}[caption={Original \javaInl{for} loop},label=lst:ex_for]
for (; x > 1; x = x / 2) {
  if (x % 2 == 0) continue;
  if (x % 5 == 0) break;
}
\end{lstlisting}
\end{minipage}
\end{center}
\end{example}
 \end{minipage}\hfill
 \begin{minipage}{.46\textwidth}\small
\begin{lstlisting}[caption={Transformed \javaInl{while} loop},label=lst:ex_while]
b: {
     while (x > 1) {
c:     {
         if (x % 2 == 0) break c;
         if (x % 5 == 0) break b;
       }
       x = x / 2;
     }
   }
\end{lstlisting}
 \end{minipage}
 \end{figure*}
 %\vfill
 
 \begin{figure*}[tbp]
\begin{minipage}{.45\textwidth}
\begin{example}\label{ex:while-transform}
 Consider the program fragment in Listing~\ref{lst:ex_abrupt}.
 Sound program transformation of this loop's body (in order to apply the loop invariant rule) must track abrupt completion within the body, but also reset and restore this tracking when encountering the \javaInl{finally} block to ensure that the semantics are not altered. This leads to the program fragment shown in Listing~\ref{lst:ex_abrupt_transformed_body}.

 \begin{center}
 \begin{minipage}{.7\textwidth}
 \small
\begin{lstlisting}[caption={Original loop},label=lst:ex_abrupt]
while (x != 0) {
  try {
    if (x > 0) return x;
    x = x + 100;
    break;
  } finally {
    if (x > 10) {
      x = -1;
      continue;
    }
  }
}
\end{lstlisting}
\end{minipage}
\end{center}
After executing this transformed loop body, the proof then continues on multiple branches for: (1.) the ``preserves invariant'' case where \java{brk} and \java{rtn} are \javaInl{false}, and \java{thrown} is \javaInl{null}; (2.) the ``exceptional use case'' where \java{thrown} is not null; (3.) the ``return use case'' where \java{rtn} is \javaInl{true}; and (4.) the ``break use case'' where \java{brk} is \javaInl{true}.
\end{example}
\end{minipage}
\hfill
\begin{minipage}{.45\textwidth}
\small
\begin{lstlisting}[caption={Transformed loop body},label=lst:ex_abrupt_transformed_body]
Throwable thrown = null;
boolean brk = false;
boolean cnt = false;
boolean rtn = false;
int rtnVal = 0;
try {
  l: {
    try {
      if (x > 0) {
        rtnVal = x;
        rtn = true;
        break l;
      }
      x = x + 100;
      brk = true;
      break l;
    } finally {
      boolean saveBrk = brk;
      brk = false;
      boolean saveCnt = cnt;
      cnt = false;
      boolean saveRtn = rtn;
      rtn = false;
      if (x > 10) {
        x = -1;
        cnt = true;
        break l;
      }
      brk = saveBrk;
      cnt = saveCnt;
      rtn = saveRtn;
    }
  }
} catch (Throwable t) {
  thrown = t;
}
\end{lstlisting}
\end{minipage}
\end{figure*}

\noindent
In~\cite{Wasser16} the concept of an \emph{indexed loop scope} (a further \emph{extended Java statement} $\openLoopScope~\java{st}~\closeLoopScope$) was proposed, allowing a designated non-active prefix for loop bodies (although the semantics of the indexed loop scope were such that it is directly useful only for a loop invariant rule for \javaInl{while} loops). Symbolic execution rules for \javaInl{continue}s and unlabeled \javaInl{break}s, as well as interaction between the various completion statements and the loop scope were defined. This allowed for the loop invariant rule below, which avoids program transformation of the loop body.
Additionally, it was shown in~\cite{SteinhofelW17} that an implementation of this rule in \KeY was more efficient than the loop invariant rule relying on program transformation.
\[\small
\seqRuleW{loopInvariantWhileWithLoopScopes}%
{\sequent{}{\upl \mathcal{U} \upr \mathit{Inv}} \\
  \sequentb{\mathit{Inv}}{\dlboxf{\(\pi \ \openLoopScope \javaInl{\ if (}\mathit{nse}\javaInl{) \{}\ \java{p}\ \javaInl{continue;} \ \java{\}}\ \closeLoopScope \ \omega\)}((\java{x} \doteq \text{FALSE} \rightarrow \mathit{Inv}){\quad} \\ \hfill \&\ (\java{x} \doteq \text{TRUE} \rightarrow \phi))}
}%
{\sequent{}{\upl \mathcal{U} \upr \dlbox{\pi \javaInl{\ while (}\mathit{nse}\javaInl{) p\ } \omega}{\phi}}}
\]
The first premiss ensures that the invariant holds in the program state before the first iteration of the loop.
The second premiss ensures both that normal and abrupt continuation of the loop body preserves the invariant; and that after leaving the loop normally or abruptly and executing the remaining program the original formula $\phi$ holds.
As this must hold for \emph{any} iteration, the assumptions $\Gamma \cup \neg\Delta$ and the update $\mathcal{U}$ expressing the program state before the first iteration are removed, with only the invariant as an assumption for the second premiss.

However, loop scopes work in a fairly opaque way: as can be seen in the rule above, the loop scope index \java{x} is never explicitly set anywhere in the rule, but rather will implicitly be set by the symbolic execution rules operating on loop scopes (with \javaInl{continue} setting it to false, and everything else setting it to true). In this paper we show how to create a more transparent solution.

\section{New Loop Unrolling Rules for JavaDL}\label{sec:loop-unrolling}

In order to introduce new loop unrolling rules specifically for \javaInl{while}, \javaInl{do} and \javaInl{for} loop, which do not require program transformation of the loop bodies, we require a non-active prefix for loop bodies in JavaDL. To this end we introduce the \attCont statement:

\subsection{Introducing the \attCont Statement}

\begin{definition}
  An \attCont statement is an \emph{extended Java statement} of the
 %labeled
 form ``\attempt{\javaInl{p}}{\javaInl{q}}''
 %or the unlabeled form
 %``\unlabeledAttempt{\javaInl{p}}{\javaInl{q}}''
 where $\java{l} \in \labels$ is a label, and \java{p} and \java{q} are (extended) Java statements.
 Non-active prefixes may additionally contain beginnings
 %``\java{attempt \{}'' and 
 ``$\att[\javaInl{l}]\ \javaInl{\{}$'' of \attCont statements.
\end{definition}
If \java{p} does not contain any labeled \javaInl{break} or \javaInl{continue} statements matching the label \java{l}, ``\attempt{\javaInl{p}}{\javaInl{q}}'' is equivalent to its \emph{unlabeled} counterpart ``\unlabeledAttempt{\javaInl{p}}{\javaInl{q}}''.
Non-active prefixes may therefore contain unlabeled \attCont beginnings ``\att \javaInl{\{}''.

The semantic meaning of \attempt{\javaInl{p}}{\javaInl{q}} is that \java{p} is executed first, then there is a choice:
\begin{enumerate}
 \item If \java{p} completes normally or completes abruptly due to a matching \javaInl{continue} statement (\javaInl{continue l;} or \javaInl{continue;}), \java{q} is executed and the statement {\small\attempt{\javaInl{p}}{\javaInl{q}}} completes for the same reason as \java{q}.
 \item If \java{p} completes abruptly due to a matching \javaInl{break}  (\javaInl{break l;} or \javaInl{break;}), \java{q} is \emph{not} executed and {\small\attempt{\javaInl{p}}{\javaInl{q}}} completes normally.
 \item If \java{p} completes abruptly for any other reason (including due to a statement \javaInl{continue l';} or \javaInl{break l';} where $\java{l} \neq \java{l'}$), \java{q} is \emph{not} executed and {\small\attempt{\javaInl{p}}{\javaInl{q}}} completes abruptly for the same reason \java{p} completed abruptly.\label{abrupt-completion-of-attempt}
\end{enumerate}
%
%The semantic meaning of the unlabeled {\small\unlabeledAttempt{\javaInl{p}}{\javaInl{q}}} is the same as that of the labeled form, with all labeled \javaInl{break}s and \java{continue}s causing abrupt completion according to~(\ref{abrupt-completion-of-attempt}.) as there is no label to match.
%
Axioms for \attCont statements are shown in Fig.~\ref{fig:axioms-attempt-continuation}.
\begin{figure}
{\small
\begin{align}
 \dlboxf{\attempt{\javaInl{p}}{\javaInl{q}}}{_{\normal}~\phi} 
 &~\equiv~ \dlboxf{p}_{\continues{l}}\dlboxf{q}_{\normal}~\phi ~\land~ \dlboxf{p}_{\breaks{l}}~\phi \label{eq:axiom-attempt-normal} \\
 \forall t \in \{\mathit{break}, \mathit{continue}, \mathit{break}_l, \mathit{continue}_l\}. \qquad\qquad\, \nonumber \\
 \dlboxf{\attempt{\javaInl{p}}{\javaInl{q}}}{_{t}~\phi} \label{eq:axiom-attempt-same-loop}
 &~\equiv~ \dlboxf{p}_{\continues{l}}\dlboxf{q}_{t}~\phi \\
 \forall t \in \bigcup_{k \in \labels \setminus \{l\}}\{\mathit{break}_k, \mathit{continue}_k\}. \qquad\qquad\qquad\qquad \nonumber \\
 \dlboxf{\attempt{\javaInl{p}}{\javaInl{q}}}{_{t}~\phi} 
 &~\equiv~ \dlboxf{p}_{\continues{l}}\dlboxf{q}_{t}~\phi~\land~\dlboxf{p}_{t}~\phi \label{eq:axiom-attempt-other-loop}
\end{align}%
}%
\caption{Axioms for \attCont}%
\label{fig:axioms-attempt-continuation}%
\end{figure}

Correct unrolling of a \javaInl{while} loop is now possible with the help of \attCont statements, as shown in Theorem~\ref{theorem:correct-unrolling}.
\begin{theorem}[Correctness of loop unrolling]\label{theorem:correct-unrolling}
$\small\dlbox{\javaInl{l: while (e) st}}{_{t}~\phi}$ is equivalent to $\small\dlbox{\javaInl{if (e)\ }\attempt{\javaInl{st}}{\javaInl{l: while (e) st}}}{_{t}~\phi}$ for all completion types $t \in \completionTypes$.
\begin{proof}
 See appendix.
\end{proof}
\label{thm:correctness-loop-unrolling}
\end{theorem}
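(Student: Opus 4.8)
The plan is a case analysis on the completion type $t$, in which each side of the claimed equivalence is rewritten by already-available axioms and the two sides are then seen to coincide. Abbreviate \javaInl{l: while (e) st} by $L$, so the claim reads $\dlboxf{L}{_{t}~\phi} \equiv \dlboxf{\javaInl{if (e)\ }\attempt{\javaInl{st}}{L}}{_{t}~\phi}$. Since ``\javaInl{if (e) st}'' abbreviates ``\javaInl{if (e) st else ;}'', applying the \javaInl{if} axiom~\eqref{eq:axiom-if} to the right-hand side always yields $\dlboxf{b = e;}{_{\normal}((\java{b} \rightarrow \dlboxf{\attempt{\javaInl{st}}{L}}{_{t}~\phi}) \land (\neg \java{b} \rightarrow \dlboxf{;}{_{t}~\phi}))}$, while each \javaInl{while} axiom of Fig.~\ref{fig:axioms-while} rewrites the left-hand side into a formula that likewise opens with $\dlboxf{b = e;}{_{\normal}(\cdot)}$ over the same fresh variable \java{b}; hence the leading assignment matches on both sides and it suffices to compare what sits under $\dlboxf{b = e;}{_{\normal}}$.

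There are exactly three sub-cases, one for each \javaInl{while} axiom. If $t = \mathit{normal}$: the skip axiom~\eqref{eq:axiom-skip-normal} collapses the \javaInl{else} branch to $\phi$, the \attCont axiom~\eqref{eq:axiom-attempt-normal} rewrites the \javaInl{then} branch to $\dlboxf{st}{_{\continues{l}}\dlboxf{L}{_{\normal}~\phi}} \land \dlboxf{st}{_{\breaks{l}}~\phi}$, and what remains is, up to commutativity of $\land$ and the order of the two implications, exactly the body of~\eqref{eq:axiom-while-normal}. If $t \in \{\mathit{break},\mathit{continue},\mathit{break}_l,\mathit{continue}_l\}$: the skip axiom~\eqref{eq:axiom-skip-abrupt} makes the \javaInl{else} conjunct trivially valid, axiom~\eqref{eq:axiom-attempt-same-loop} turns the \javaInl{then} conjunct into $\dlboxf{st}{_{\continues{l}}\dlboxf{L}{_{t}~\phi}}$, and~\eqref{eq:axiom-while-same-loop} says $\dlboxf{L}{_{t}~\phi}$ is valid; since a box formula is valid whenever its postcondition is, the right-hand side is valid, and the left-hand side $\dlboxf{L}{_{t}~\phi}$ is valid directly by~\eqref{eq:axiom-while-same-loop}, so the two are equivalent. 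If $t = \mathit{break}_k$ or $t = \mathit{continue}_k$ for some $k \neq l$: \eqref{eq:axiom-skip-abrupt} again discards the \javaInl{else} conjunct, axiom~\eqref{eq:axiom-attempt-other-loop} rewrites the \javaInl{then} conjunct to $\dlboxf{st}{_{\continues{l}}\dlboxf{L}{_{t}~\phi}} \land \dlboxf{st}{_{t}~\phi}$, and this is the body of~\eqref{eq:axiom-while-other-loop}. Since $\completionTypes = \normal \cup \abrupt$ and these three groups exhaust $\abrupt$ together with $\mathit{normal}$, the equivalence holds for every $t \in \completionTypes$.

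The only step carrying content beyond mechanical rewriting --- and thus the one I would be most careful about --- is the label bookkeeping. In the unrolled program the \attCont statement must carry exactly the loop label \java{l}: this is what makes \javaInl{break;}, \javaInl{break l;}, \javaInl{continue;} and \javaInl{continue l;} occurring inside \java{st} be resolved at the same program point as in the original loop, while the inner copy of \javaInl{l: while (e) st} in the continuation part re-establishes \java{l} for the following iteration. Exactly this is built into Fig.~\ref{fig:axioms-attempt-continuation}: the label \java{l} appears both in the $\breaks{l}$/$\continues{l}$ annotations and in the side conditions picking out $t$. Granting those axioms, the proof is a routine computation; if one preferred to argue directly from the operational meaning of \attCont given by the three-point description above, one would first re-derive Fig.~\ref{fig:axioms-attempt-continuation} from it, after which the argument is unchanged.
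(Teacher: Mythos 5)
Your proposal is correct and follows essentially the same route as the paper's appendix proof: a case split into $t=\mathit{normal}$, $t\in\{\mathit{break},\mathit{continue},\mathit{break}_l,\mathit{continue}_l\}$, and $t\in\{\mathit{break}_k,\mathit{continue}_k\}$ for $k\neq l$, expanding the right-hand side with the \javaInl{if} and skip axioms and the matching \attCont axiom~\eqref{eq:axiom-attempt-normal}/\eqref{eq:axiom-attempt-same-loop}/\eqref{eq:axiom-attempt-other-loop}, then recognizing the corresponding \javaInl{while} axiom (with both sides collapsing to valid via necessitation in the middle case). The label-bookkeeping remark is sound commentary but does not change the argument.
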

\subsection{Symbolic execution rules for \attCont}

We introduce new symbolic execution rules for the \attCont statement into JavaDL as follows:
\subsection*{For an empty \att block:}
\[\small
 \seqRule{emptyAttempt}%
 {\sequent{}{\upl \mathcal{U} \upr \dlboxf{$\pi$ q $\omega$}{\phi}}
 }%
 {\sequent{}{\upl \mathcal{U} \upr \dlboxf{$\pi$ \combinedAttempt{}{\javaInl{q}} $\omega$}{\phi}}}
\]
We combine two rules into one here, by writing ``\combinedAtt'' to express that there is a rule for the labeled \attCont statement and a rule for the unlabeled \attCont statement.
\subsection*{For an \att block with a leading \javaInl{continue} statement:}
\[\small
 \seqRuleW{attemptContinueNoLabel}%
 {\sequent{}{\upl \mathcal{U} \upr \dlboxf{$\pi$ q $\omega$}{\phi}}
 }%
 {\sequent{}{\upl \mathcal{U} \upr \dlboxf{$\pi$ \combinedAttempt{\javaInl{continue; p}}{\javaInl{q}} $\omega$}{\phi}}}
\]
\[\small
 \seqRuleW{attemptContinue}%
 {\sequent{}{\upl \mathcal{U} \upr \dlboxf{$\pi$ q $\omega$}{\phi}}
 }%
 {\sequent{}{\upl \mathcal{U} \upr \dlboxf{$\pi$ \attempt{\javaInl{continue l; p}}{\javaInl{q}} $\omega$}{\phi}}}
\]
\[\small
 \text{$\java{l} \neq \java{l}'$:}
 \seqRuleW{attemptContinueNoMatch}%
 {\sequent{}{\upl \mathcal{U} \upr \dlbox{\pi\javaInl{\ continue l}$'$\java{; $\omega$}}{\phi}}
 }%
 {\sequent{}{\upl \mathcal{U} \upr \dlboxf{$\pi$ \combinedAttempt{\javaInl{continue l}$'$\javaInl{; p}}{\javaInl{q}} $\omega$}{\phi}}}
\]
\subsection*{For an \att block with a leading \javaInl{break} statement:}
\[\small
 \seqRuleW{attemptBreakNoLabel}%
 {\sequent{}{\upl \mathcal{U} \upr \dlboxf{$\pi$ $\omega$}{\phi}}
 }%
 {\sequent{}{\upl \mathcal{U} \upr \dlboxf{$\pi$ \combinedAttempt{\javaInl{break; p}}{\javaInl{q}} $\omega$}{\phi}}}
\]
\[\small
 \seqRuleW{attemptBreak}%
 {\sequent{}{\upl \mathcal{U} \upr \dlboxf{$\pi$ $\omega$}{\phi}}
 }%
 {\sequent{}{\upl \mathcal{U} \upr \dlboxf{$\pi$ \attempt{\javaInl{break l; p}}{\javaInl{q}} $\omega$}{\phi}}}
\]
\[\small
 \text{$\java{l} \neq \java{l}'$:}
 \seqRuleW{attemptBreakNoMatch}%
 {\sequent{}{\upl \mathcal{U} \upr \dlbox{\pi\javaInl{\ break l}$'$\java{; $\omega$}}{\phi}}
 }%
 {\sequent{}{\upl \mathcal{U} \upr \dlboxf{$\pi$ \combinedAttempt{\javaInl{break l}$'$\java{; p}}{\javaInl{q}} $\omega$}{\phi}}}
\]
\subsection*{For an \att block with a leading \javaInl{throw} statement:}
\[\small
 \seqRuleW{attemptThrow}%
 {\sequent{}{\upl \mathcal{U} \upr \dlbox{\pi\javaInl{\ throw\ }\mathit{se}\java{; $\omega$}}{\phi}}
 }%
 {\sequent{}{\upl \mathcal{U} \upr \dlboxf{$\pi$ \combinedAttempt{\javaInl{throw\ }\mathit{se}\java{; p}}{\javaInl{q}} $\omega$}{\phi}}}
\]
\subsection*{For an \att block with a leading \javaInl{return} statement:}
\[\small
 \seqRuleW{attemptEmptyReturn}%
 {\sequent{}{\upl \mathcal{U} \upr \dlbox{\pi\javaInl{\ return;\ }\omega}{\phi}}
 }%
 {\sequent{}{\upl \mathcal{U} \upr \dlboxf{$\pi$ \combinedAttempt{\javaInl{return; p}}{\javaInl{q}} $\omega$}{\phi}}}
\]
\[\small
 \seqRuleW{attemptReturn}%
 {\sequent{}{\upl \mathcal{U} \upr \dlbox{\pi\javaInl{\ return\ }\mathit{se}\java{; $\omega$}}{\phi}}
 }%
 {\sequent{}{\upl \mathcal{U} \upr \dlboxf{$\pi$ \combinedAttempt{\javaInl{return\ }\mathit{se}\java{; p}}{\javaInl{q}} $\omega$}{\phi}}}
\]
Further symbolic execution rules in JavaDL for \javaInl{continue} statements and unlabeled \javaInl{break} statements when encountering other non-active prefixes are identical to those given in~\cite{Wasser16}.
These merely propagate the abruptly completing statements upwards (executing the \javaInl{finally} block first, in the case of a \javaInl{try}-(\javaInl{catch})-\javaInl{finally} statement). As an example, where $\mathit{cs}$ is a possibly empty list of \javaInl{catch}-blocks:
\[\small
 \seqRule{tryContinueNoLabel}%
 {\sequent{}{\upl \mathcal{U} \upr \dlbox{\pi\javaInl{\ r continue;\ }\omega}{\phi}}
 }%
 {\sequent{}{\upl \mathcal{U} \upr \dlbox{\pi\javaInl{\ try \{ continue; p \}\ }\mathit{cs}\javaInl{\ finally \{ r \}\ }\omega}{\phi}}}
\]

\subsection{JavaDL Loop Unwinding Rules using \attCont}\label{subsec:loop-unrolling}

We can also use \attCont statements in a loop unwinding rule for \javaInl{while} loops in JavaDL. This does not require nested modalities as used in~\cite{Wasser16}:
\[\small
\seqRuleW{unwindWhileLoop}%
{\sequent{}{\upl \mathcal{U} \upr [\pi\javaInl{\ if (}\mathit{nse}\java{) }\combinedAtt\java{ \{ p \}} \\
\qquad\qquad\qquad\qquad\quad\ \ \javaInl{continuation \{ l}^?\javaInl{: while (}\mathit{nse}\java{) p \} $\omega$}]{\phi}}
}%
{\sequent{}{\upl \mathcal{U} \upr \dlbox{\java{$\pi$ l$^?$: }\javaInl{while (}\mathit{nse}\java{) p $\omega$}}{\phi}}}
\]
We unroll and execute one iteration of the loop, winding up back at the beginning of the loop unless the loop body completes abruptly (not due to a matching \javaInl{continue}).
This closely resembles the loop unrolling equivalence in Theorem~\ref{theorem:correct-unrolling}.

The loop unwinding rule for \javaInl{do} loops is almost the same, except that the condition is not checked before the first iteration:
\[\small
\seqRuleW{unwindDoLoop}%
{\sequent{}{\upl \mathcal{U} \upr \dlboxf{\(\pi\ \combinedAttempt{\java{p}}{\java{l}^?\javaInl{: while (}\mathit{nse}\java{) p}} \omega\)}\phi}
}%
{\sequent{}{\upl \mathcal{U} \upr \dlbox{\pi\javaInl{\ l}^?\javaInl{: do p while (}\mathit{nse}\java{); $\omega$}}{\phi}}}
\]
As can be seen, a single loop unwinding turns a \javaInl{do} loop into a \javaInl{while} loop.

We can also introduce a loop unwinding rule for the \javaInl{for} loop.
As will be seen later, we have a rule to pull out the initializer of the \javaInl{for} loop, so the rule only considers \javaInl{for} loops with empty initializers:
\[\small
\seqRuleW{unwindForLoop}%
{\sequent{}{\upl \mathcal{U} \upr [\pi\javaInl{\ if (}g'\java{) } \combinedAtt \java{ \{ p \}} \\
\qquad\qquad\qquad\qquad\ \ \ \javaInl{continuation \{\ }\mathit{upd}'\javaInl{\ l}^?\javaInl{: for (;\ }g\javaInl{;\ }\mathit{upd}\javaInl{) p \}\ }\omega]{\phi}}
}%
{\sequent{}{\upl \mathcal{U} \upr \dlbox{\pi\javaInl{\ l}^?\javaInl{: for (;\ }g\javaInl{;\ }\mathit{upd}\javaInl{) p\ }\omega}{\phi}}}
\]
Here $\mathit{upd}'$ is a statement list equivalent to the expression list $\mathit{upd}$, and $g'$ is an expression equivalent to the guard $g$ (\javaInl{true}, if $g$ is empty).

As in the rules for \javaInl{while} and \javaInl{do} loops, the loop body is executed in an \javaInl{attempt} block. But before re-entering the loop in the continuation, we execute the \javaInl{for} loop's update. This ensures that we execute the \javaInl{for} loop's update whether the loop body completes normally or completes abruptly due to a matching \javaInl{continue} statement.

\section{New Loop Invariant Rules for JavaDL}\label{sec:loop-inv-for}

In order for the loop invariant rule based on loop scopes to be sound, when a \java{continue} statement reached a loop scope the appropriate symbolic execution rule in JavaDL needed to opaquely do two things: (1.) set the loop scope index to false and (2.) remove the entire surrounding legal program fragment. 
Thanks to \attCont statements we can explcitly set a variable in the continuation in order to transparently solve the first of these issues. However in order to solve the second issue transparently, we require the addition of a further extended Java statement, which explicitly halts the program. 

\subsection{Introducing the Halt Statement}

\begin{definition}
 The \emph{halt statement} (written \halt) is an \emph{extended Java statement} that, when executed, immediately halts the entire legal program fragment in which it is contained, ensuring that no further statements are executed (not even statements in \javaInl{finally} blocks).  
\end{definition}
The dynamic logic with modalities for each type of completion can be extended with new modalities $\dlboxf{p}{_{\halt}(\cdot)}$ for all legal program fragments \java{p}.
Axioms for \halt and the new modalities are shown in Fig.~\ref{fig:axioms-halt}.
In particular, loop unrolling using \attCont statements is also valid in the halt modalities:

\begin{theorem}[Correctness of loop unrolling in the halt modalities]
The formulae 
$\small\dlbox{\javaInl{if (e)\ }\attempt{\javaInl{st}}{\javaInl{l: while (e) st}}}{_{\halt}~\phi}$
and
$\small\dlbox{\javaInl{l: while (e) st}}{_{\halt}~\phi}$ are equivalent.
\label{thm:correctness-loop-unrolling-halt}
\end{theorem}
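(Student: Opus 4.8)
\emph{Proof strategy.} The plan is to reproduce the derivation underlying Theorem~\ref{theorem:correct-unrolling}, but to carry it out entirely inside the halt modalities of Fig.~\ref{fig:axioms-halt} rather than the ordinary ones. The key observation is that, as far as \javaInl{while} loops and \attCont statements are concerned, the completion ``type'' $\halt$ behaves exactly like a labeled completion type $\mathit{break}_k$ or $\mathit{continue}_k$ with $k \neq l$: it is never \emph{caught}, neither by a \javaInl{while} loop nor by an \attCont statement, but simply propagates outward; and evaluating the guard $\javaInl{b = e;}$ can never itself halt, since by definition $\halt$ occurs only as an explicit extended statement and never inside a guard expression. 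Consequently the halt-modality axioms of Fig.~\ref{fig:axioms-halt} for \javaInl{if}, for \attCont, and for \javaInl{while} have precisely the shapes of~\eqref{eq:axiom-if}, \eqref{eq:axiom-attempt-other-loop} and~\eqref{eq:axiom-while-other-loop} with the completion type $t$ instantiated to $\halt$, while the halt axiom for the empty statement makes $\dlbox{;}{_{\halt}~\phi}$ valid, just as~\eqref{eq:axiom-skip-abrupt} makes $\dlboxf{;}{_\abrupt~\phi}$ valid.

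Concretely, I would start from $\dlbox{\javaInl{if (e)\ }\attempt{\javaInl{st}}{\javaInl{l: while (e) st}}}{_{\halt}~\phi}$ and unfold it in three steps. First, apply the halt version of the \javaInl{if} axiom, reading \javaInl{if (e) st} as \javaInl{if (e) st else ;}, so that the then-branch is $\attempt{\javaInl{st}}{\javaInl{l: while (e) st}}$ and the else-branch is \javaInl{;}; this gives $\dlboxf{b = e;}{_{\normal}((\java{b} \rightarrow \dlbox{\attempt{\javaInl{st}}{\javaInl{l: while (e) st}}}{_{\halt}~\phi}) \land (\neg\java{b} \rightarrow \dlbox{;}{_{\halt}~\phi}))}$. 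Second, since $\dlbox{;}{_{\halt}~\phi}$ is valid, the $\neg\java{b}$ conjunct drops out. Third, unfold the \attCont statement (its label being $l$, with \javaInl{st} in the \att block and \javaInl{l: while (e) st} as its continuation) via its halt axiom, obtaining $\dlboxf{b = e;}{_{\normal}(\java{b} \rightarrow (\dlbox{st}{_{\continues{l}}\dlbox{\javaInl{l: while (e) st}}{_{\halt}~\phi}} \land \dlbox{st}{_{\halt}~\phi}))}$. This is, up to the order of the two conjuncts, the right-hand side of the halt-modality \javaInl{while} axiom for $\dlbox{\javaInl{l: while (e) st}}{_{\halt}~\phi}$, which establishes the claimed equivalence.

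I do not expect a genuine obstacle here: the whole argument is nothing but the case $t \in \bigcup_{k \in \labels \setminus \{l\}}\{\mathit{break}_k,\mathit{continue}_k\}$ of the proof of Theorem~\ref{theorem:correct-unrolling}, re-used with $t := \halt$. The only two points that warrant an explicit word are the ``degeneracies'' invoked above --- that guard evaluation cannot halt, and that neither the loop nor the \attCont statement ever catches $\halt$ --- and both follow immediately from the definition of the halt statement together with the axioms of Fig.~\ref{fig:axioms-halt}. If one would rather prove Theorem~\ref{theorem:correct-unrolling} (and hence this theorem) semantically over operational traces instead of by manipulating axioms, the same adaptation applies: one restricts attention to the runs that end in a $\halt$ and checks, by induction on the number of completed iterations, that the halting runs of \javaInl{l: while (e) st} coincide with those of \javaInl{if (e)\ }\attempt{\javaInl{st}}{\javaInl{l: while (e) st}}.
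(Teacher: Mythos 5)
Your proposal follows essentially the same route as the paper's proof: unfold the \javaInl{if} via its halt axiom, discharge the else-branch using the validity of $\dlboxf{;}{_{\halt}~\phi}$ (axiom~\eqref{eq:axiom-skip-halt}), unfold the \attCont statement via~\eqref{eq:axiom-attempt-halt}, and recognize the result as the right-hand side of the \javaInl{while} halt axiom~\eqref{eq:axiom-while-halt}. There is, however, one inaccuracy in how you set this up. The halt axioms for \javaInl{if} and \javaInl{while} do \emph{not} have ``precisely the shapes'' of~\eqref{eq:axiom-if} and~\eqref{eq:axiom-while-other-loop} with $t := \halt$: both~\eqref{eq:axiom-if-halt} and~\eqref{eq:axiom-while-halt} carry an additional leading conjunct $\dlboxf{b = e;}{_{\halt}\phi}$ accounting for a halt during guard evaluation. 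You discard this conjunct by appealing to the claim that evaluating the guard can never halt; that claim may be semantically plausible, but it is not derivable from the axioms of Fig.~\ref{fig:axioms-halt}, which assert validity only for the empty statement, \javaInl{break}s and \javaInl{continue}s under $\halt$ (and for $\halt$ under the ordinary completion types), not for assignments. As a consequence, the formula you arrive at after your three steps is missing the conjunct $\dlboxf{b = e;}{_{\halt}\phi}$ that is present on the right-hand side of~\eqref{eq:axiom-while-halt}, so the final identification does not literally go through. The repair is immediate and requires no extra assumption: keep the conjunct when applying~\eqref{eq:axiom-if-halt}; since the identical conjunct appears in~\eqref{eq:axiom-while-halt}, it simply carries through on both sides of the equivalence, which is exactly what the paper's derivation does. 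Your closing remark that a semantic proof over halting traces would also work is fine but is not needed once this bookkeeping is fixed.
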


\begin{proof}
 See appendix.
\end{proof}

\begin{figure}
{\small%
 \begin{minipage}{.15\textwidth}
 \small%
 \begin{align}
 \dlboxf{;}{_{\halt}~\phi} \label{eq:axiom-skip-halt}
 \end{align}%
 \end{minipage}
 \begin{minipage}{.26\textwidth}
 \small%
 \begin{align}
 \dlbox{\javaInl{break;}}{_{\halt}~\phi} \\
 \dlbox{\javaInl{break\ }l\java{;}}{_{\halt}~\phi}
 \end{align}%
 \end{minipage}
 \begin{minipage}{.31\textwidth}
 \small%
 \begin{align}
 \dlbox{\javaInl{continue;}}{_{\halt}~\phi} \\
 \dlbox{\javaInl{continue\ }l\java{;}}{_{\halt}~\phi}
 \end{align}%
 \end{minipage}
 \begin{minipage}{.26\textwidth}
 \small%
 \begin{align}
 \dlboxf{\halt}{_{\halt}~\phi} &~\equiv~ \phi \\
 \dlboxf{\halt}{_{\completionTypes}~\phi}
 \end{align}%
 \end{minipage}
 \begin{align}
 \dlboxf{st1 st2}{_{\halt}~\phi} ~\equiv~& \dlboxf{st1}{_{\halt}~\phi}~\land~\dlboxf{st1}{_{\normal}\dlboxf{st2}{_{\halt}~\phi}}  
 \end{align}
 \vspace{-0.7cm}
 \begin{align}
%\dlboxf{if (e) st}{_{\halt}\phi} &~\equiv~ \dlboxf{b = e;}{_{\halt}\phi} ~\land~ \dlboxf{b = e;}{_{\normal}(\java{b} \rightarrow \dlboxf{st}{_{\halt}\phi})} \label{eq:axiom-if-halt} \\
&\dlbox{\javaInl{if (e) st1 else st2}}{_{\halt}\phi} \nonumber \\
~\equiv~& \dlboxf{b = e;}{_{\halt}\phi} ~\land~ \dlboxf{b = e;}_{\normal}((\java{b} \rightarrow \dlboxf{st1}{_{\halt}\phi})~\land~(\neg\java{b} \rightarrow \dlboxf{st2}{_{\halt}\phi})) \label{eq:axiom-if-halt} \\
&\dlbox{l\javaInl{: while (e) st}}{_{\halt}\phi} \nonumber \\
 ~\equiv~& \dlboxf{b = e;}_{\halt}\phi ~\land~ \dlboxf{b = e;}_{\normal}(\java{b} \rightarrow (\dlboxf{st}_{\halt}\phi~\land~
 \dlboxf{st}_{\continues{l}}\dlbox{l\javaInl{: while (e) st}}_{\halt}~\phi)) \label{eq:axiom-while-halt}
 \end{align}
 \vspace{-0.5cm}
 \begin{align}
 \dlbox{\javaInl{try \{ p \} finally \{ q \}}}_\halt \phi &~\equiv~
 \dlboxf{p}_\halt \phi ~\land~ \dlboxf{p}_\completionTypes \dlboxf{q}_\halt \phi  \\
  \dlboxf{\attempt{\javaInl{p}}{\javaInl{q}}}{_{\halt}\phi}
 &~\equiv~ \dlboxf{p}{_{\halt}\phi} ~\land~ \dlboxf{p}{_{\continues{l}}\dlboxf{q}{_{\halt}\phi}} \label{eq:axiom-attempt-halt}
\end{align}%
}%
\caption{Axioms for the halt statement and halt modality}%
\label{fig:axioms-halt}%
\end{figure}

\subsection*{Halting in JavaDL}

The single symbolic execution rule in JavaDL required for the halt statement is:
\[\small
 \seqRule{halt}%
 {\sequent{}{\upl \mathcal{U} \upr \phi}
 }%
 {\sequent{}{\upl \mathcal{U} \upr \dlboxf{$\pi$ \halt $\omega$}{\phi}}}
\]
Provided correct modalities for \javaInl{throw} and \javaInl{return}, as well as further axioms for missing Java statements (in particular \javaInl{throw}, \javaInl{try}-\javaInl{catch}, \javaInl{return}, \emph{assignment} and dealing with method calls),
Conjecture~\ref{conj:equiv-dl} claims equivalence between JavaDL and the dynamic logic with modalities for each type of completion.
\begin{conjecture}\label{conj:equiv-dl}
 The JavaDL formula $\dlboxf{p}{\phi}$ must hold iff $\phi$ holds in all normally completing or halting states reachable by executing \java{p}:
\begin{align*}
 \dlboxf{p}{\phi}~\equiv~\dlboxf{p}{_{\mathit{normal}}~\phi} ~\land~ \dlboxf{p}{_{\halt}\phi}
\end{align*}
\end{conjecture}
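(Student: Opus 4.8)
The plan is to prove the conjectured equivalence by relating both calculi to a single operational semantics of (extended) Java statements and then showing that JavaDL's symbolic execution and rewriting with the completion-type axioms simulate each other under the translation that reads $\dlboxf{\cdot}{\phi}$ as $\dlboxf{\cdot}{_{\mathit{normal}}~\phi}\land\dlboxf{\cdot}{_{\halt}~\phi}$. First I would fix a small-step operational semantics $\langle\java{p},s\rangle\to\langle\java{p}',s'\rangle$ on legal program fragments whose terminal configurations carry a tag that is either a completion type $t\in\completionTypes$, the tag $\halt$, or ``diverges''. Relative to it I would establish that the completion-type axioms of Figs.~\ref{fig:axioms-simple}, \ref{fig:axioms-while}, \ref{fig:axioms-attempt-continuation} and \ref{fig:axioms-halt}, together with the still-unstated axioms for \javaInl{throw}, \javaInl{try}-\javaInl{catch}, \javaInl{return}, assignment and method invocation, are sound and complete: $\models\dlboxf{p}{_{t}~\phi}$ iff every terminating run of \java{p} with tag $t$ ends in a state satisfying $\phi$, and likewise for $\halt$. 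This reduces the conjecture to the purely semantic claim that JavaDL's $\dlboxf{p}{\phi}$ holds exactly when $\phi$ holds in all normally-completing and all halting states of \java{p}, i.e. to adequacy of JavaDL's symbolic-execution calculus under that reading.

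The core of the adequacy argument is a step-wise simulation lemma. Symbolic execution in JavaDL acts on the active statement of a fragment $\pi\,\java{st}\,\omega$ and terminates only when the program part becomes empty (where the empty modality is the identity) or when a \halt is exposed (where $\dlboxf{\pi\,\halt\,\omega}{\phi}\equiv\phi$). I would show that each JavaDL symbolic-execution step applied to $\dlboxf{\pi\,\java{st}\,\omega}{\phi}$ is, under the translation above, equivalent in the completion-type logic to the result of applying the matching completion-type axioms to the decomposition of $\pi\,\java{st}\,\omega$; here the non-active prefix $\pi$ contributes exactly the abrupt-completion bookkeeping that the subscripts $t\in\abrupt$ make explicit. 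The proof is a case analysis over the JavaDL rules: assignment, \javaInl{if}, sequencing, \javaInl{break} and \javaInl{continue} reaching an enclosing labeled block, an \att block or a \javaInl{try}-\javaInl{finally}, the \att rules of Fig.~\ref{fig:axioms-attempt-continuation}, the rule for \halt, and loop unwinding. Since the cases for plain statements, labeled blocks, \javaInl{try}-\javaInl{finally} and method-frames are inherited from \cite{KeYBook2016,Wasser16}, the genuinely new cases to verify are those for \att (Fig.~\ref{fig:axioms-attempt-continuation}), \halt (Fig.~\ref{fig:axioms-halt}), and the new loop-unwinding rules.

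To lift the step-wise simulation to whole programs I would use well-founded induction on the symbolic-execution ordering for loop-free fragments and a greatest-fixpoint argument for loops: since $\dlboxf{\cdot}{\cdot}$ is a partial-correctness modality, $\dlboxf{l\javaInl{: while (e) st}}{_{t}~\phi}$ is the greatest solution of the unfolding given by \eqref{eq:axiom-while-normal}, \eqref{eq:axiom-while-other-loop} and \eqref{eq:axiom-while-same-loop} (and \eqref{eq:axiom-while-halt} for the $\halt$ tag), which is also what repeated application of the JavaDL loop-unwinding rule produces, and Theorems~\ref{thm:correctness-loop-unrolling} and \ref{thm:correctness-loop-unrolling-halt} supply exactly the unrolling identities needed to make the two fixpoints coincide. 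Combining the reduction of the first paragraph with the simulation lemma and the fixpoint argument yields $\dlboxf{p}{\phi}\equiv\dlboxf{p}{_{\mathit{normal}}~\phi}\land\dlboxf{p}{_{\halt}~\phi}$.

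I expect the main obstacle to be precisely this loop-and-divergence case together with the prefix bookkeeping: one must justify, via an invariant characterizing for every reachable configuration $\pi\,\java{st}\,\omega$ which completion types can still escape $\pi$, that collapsing all of JavaDL's in-program context handling onto the two subscripts $\mathit{normal}$ and $\halt$ loses nothing — the $\halt$ subscript in particular absorbing what used to be the opaque ``remove the surrounding program fragment'' behaviour of the loop-scope rules. A second, more foundational difficulty — and the reason the statement is only a conjecture — is that the equivalence presupposes that the still-unstated JavaDL rules and modalities for exceptions, \javaInl{return} and method calls have been fixed so as to agree with the completion-type modalities in the first place (for instance, \javaInl{return} inside a method-frame must behave as normal completion of the popped frame, matching the subscript $\mathit{normal}$); making those choices consistently is itself part of the work rather than a routine check.
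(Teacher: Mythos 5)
There is nothing in the paper to compare your attempt against: the statement is explicitly a \emph{conjecture}, and the paper offers no proof of it --- it is stated as contingent on ``correct modalities for \javaInl{throw} and \javaInl{return}, as well as further axioms for missing Java statements,'' none of which are given. Your last paragraph correctly identifies this as the reason the claim cannot currently be more than a conjecture, which is the most important observation one can make here.

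That said, what you have written is a research plan, not a proof, and you should be explicit about that. Every load-bearing step is deferred: the operational semantics is not defined; the soundness and completeness of the typed axioms relative to it (your first reduction) is itself a theorem on the order of the whole paper's appendix, multiplied over every statement form including exceptions and method frames; and the ``step-wise simulation lemma'' is asserted rather than argued for even one representative case. Two points deserve particular caution. First, your fixpoint argument conflates the semantic characterization of $\dlbox{l\javaInl{: while (e) st}}{_{t}~\phi}$ as a greatest fixpoint of the unfolding with what ``repeated application of the JavaDL loop-unwinding rule produces'': finite unwinding never reaches a terminal configuration for a program that actually iterates, so identifying the two requires an explicit continuity or admissibility argument (or a purely semantic treatment that bypasses the calculus entirely --- which would in fact be the cleaner route, since the conjecture is a semantic equivalence, not a derivability claim). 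Second, the left-hand side $\dlboxf{p}{\phi}$ also quantifies over runs of \java{p} that complete \emph{abruptly} at top level; the conjectured right-hand side silently treats such runs as irrelevant (only $\mathit{normal}$ and $\halt$ appear), and your simulation invariant about ``which completion types can still escape $\pi$'' would need to establish that JavaDL's semantics genuinely discards those runs rather than, say, identifying them with nontermination in a way that changes the diamond modality. Neither issue is fatal, but both must be addressed before the plan becomes a proof.
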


\subsection{Loop Invariant Rule for \javaInl{while} Loops using \attCont}

Thanks to \attCont and halt statements we introduce the following loop invariant rule
%\todo{we could introduce it for multi-modal DL first and give a conjecture} 
for \javaInl{while} loops, where \java{x} is a fresh boolean variable not occuring anywhere in the legal program fragment ``$\pi\javaInl{\ l}^?\javaInl{: while (}\mathit{nse}\javaInl{) p\ }\omega$'':
\[\small
\seqRuleW{loopInvariantWhile}%
{\sequent{}{\upl \mathcal{U} \upr \mathit{Inv}} \\
\sequentb{\mathit{Inv}}{[\pi\javaInl{\ x = true;} \\
  \qquad\qquad\javaInl{if (}\mathit{nse}\javaInl{)} \\
\qquad\qquad\qquad \combinedAttempt{p}{\javaInl{x = false;\ }\halt} \\
\qquad\ \ \, \omega]((\java{x} \doteq \text{FALSE} \rightarrow \mathit{Inv})\ \&\ (\java{x} \doteq \text{TRUE} \rightarrow \phi))}
}%
{\sequent{}{\upl \mathcal{U} \upr \dlbox{\pi \; \javaInl{l}^?\javaInl{: while (}\mathit{nse}\javaInl{) p} \;\omega}{\phi}}}
\]
As the \cont block is constructed only from a simple assignment and the halt statement, if \java{p} completes normally or completes abruptly due to a matching \java{continue}, it is guaranteed to set \java{x} to false and complete due to the halt statement, leaving the invariant to be proven in the state reached after execution of a single loop iteration.

In all other cases \java{x} retains its initial value true, leaving \(\upl \mathcal{U}' \upr \dlboxf{$\pi$ abrupt $\omega$}{\phi}\) to be proven, with $\mathcal{U}'$ expressing the state the program is in when the loop is left.
If $\mathit{nse}$ evaluates to false or \java{p} completes abruptly due to a matching \javaInl{break}, then \java{abrupt} is empty and it remains to prove $\upl \mathcal{U}' \upr \dlboxf{$\pi$ $\omega$}{\phi}$.
If \java{p} completes abruptly due to any other statement, \java{abrupt} is equal to that abruptly completing statement.

\subsection{Loop Invariant Rule for \javaInl{for} Loops using \attCont}

In order to prove that the loop invariant of a \javaInl{for} loop initially holds, we must first reach the ``initial'' entry point of the loop. This is the point after full execution of the loop initializer. We therefore introduce the following rule to pull out the loop initializer of a \javaInl{for} loop, where $init'$ is a statement list equivalent to the loop initializer $init$:
\[\small
\seqRule{pullOutLoopInitializer}%
{\sequent{}{\upl \mathcal{U} \upr \dlbox{\pi~\jParOp\mathit{init}'\java{ l}^?\javaInl{: for (;\ }\mathit{guard}\java{; }\mathit{upd}\javaInl{) p \}\ }\omega}{\phi}}
}%
{\sequent{}{\upl \mathcal{U} \upr \dlbox{\java{$\pi$ l$^?$: }\javaInl{for}\java{ ($init$; $guard$; $upd$) p $\omega$}}{\phi}}}
\]
The following loop invariant rule can then be applied to \javaInl{for} loops without loop initializers, where
\java{x} is a fresh boolean variable not occurring anywhere in the legal program fragment ``$\pi\java{ l}^?\javaInl{: for (;\ }\mathit{guard}\java{; }\mathit{upd}\javaInl{) p\ }\omega$'',
$\mathit{upd}'$ is a statement list equivalent to the expression list $\mathit{upd}$, and $\mathit{guard}'$ is an expression equivalent to the guard $\mathit{guard}$ (\javaInl{true}, if $\mathit{guard}$ is empty):
\[\small
\seqRuleW{loopInvariantFor}%
{\sequent{}{\upl \mathcal{U} \upr \mathit{Inv}} \\
\sequentb{\mathit{Inv}}{[\pi\javaInl{\ x = true;} \\
  \qquad\qquad \javaInl{if (}\mathit{guard}'\java{)} \\
  \qquad\qquad\qquad \combinedAttempt{p}{\mathit{upd}'\javaInl{\ x = false;\ }\halt} \\
\qquad\quad \omega]((\java{x} \doteq \text{FALSE} \rightarrow \mathit{Inv})\ \&\ (\java{x} \doteq \text{TRUE} \rightarrow \phi))}
}%
{\sequent{}{\upl \mathcal{U} \upr \dlbox{\pi~\java{l}^?\javaInl{: for (;\ }\mathit{guard}\java{; }\mathit{upd}\java{) p}~\omega}{\phi}}}
\]
As the continuation is constructed only from the modified \javaInl{for} loop update $\mathit{upd}'$, a simple assignment and the halt statement, it cannot contain \javaInl{break}s, \javaInl{continue}s or \javaInl{return}s. It also cannot contain an explicit \javaInl{throw}, but implicitly exceptions can be thrown in $\mathit{upd}'$. Thus if \java{p} completes normally or completes abruptly due to a matching \javaInl{continue}, causing symbolic execution of the continuation, this will either set \java{x} to false and complete due to the halt statement, leaving the invariant to be proven in the state reached after execution of a single loop iteration; or it will complete abruptly due to a statement \javaInl{throw se;} (keeping \java{x} set to its initial value of true), leaving \(\upl \mathcal{U}' \upr \dlbox{\pi\javaInl{\ throw se;\ }\omega}{\phi}\) to be proven, with $\mathcal{U}'$ expressing the state the program is in when the loop is left abruptly due to the exception.
All other cases are identical to those for the \javaInl{while} loop invariant above.

\begin{theorem}
  The symbolic execution loop invariant rules
  %using \attCont statements
  \ruleName{loopInvariantWhile} and \ruleName{loopInvariantFor}
  are sound.
  \label{thm:soundness-loop-inv-rule}
\end{theorem}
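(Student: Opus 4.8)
The plan is to argue in the dynamic logic that carries one modality per completion type together with the halt modality — whose meaning is fixed by the axioms in Figs.~\ref{fig:axioms-simple}, \ref{fig:axioms-while}, \ref{fig:axioms-attempt-continuation} and \ref{fig:axioms-halt} — and then transfer the result to \emph{JavaDL} by reading each plain box $\dlboxf{p}{\psi}$, in the conclusion as well as in the second premiss, as $\dlboxf{p}{_{\mathit{normal}}\psi}\land\dlboxf{p}{_{\halt}\psi}$, which is exactly what Conjecture~\ref{conj:equiv-dl} licenses. It then suffices to show that the conjunction of the two premiss formulae entails the conclusion formula in every state. I would first record the observation doing most of the work silently: since \java{x} is fresh for $\pi$, $\mathit{nse}$ (resp.\ $\mathit{guard}$, $\mathit{upd}$), \java{p}, $\omega$, $\mathit{Inv}$ and $\phi$, an assignment to \java{x} never affects the evaluation of any of them, and $\mathit{Inv}$ and $\phi$ have the same truth value in any two states differing only in \java{x}. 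I treat \ruleName{loopInvariantWhile} in detail and indicate the small changes for \ruleName{loopInvariantFor} at the end.

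The heart of the argument is an induction showing that $\mathit{Inv}$ holds at every visit of the loop head — for \ruleName{loopInvariantWhile} the point just before $\mathit{nse}$ is evaluated, and for \ruleName{loopInvariantFor} (applied only once \ruleName{pullOutLoopInitializer} has removed the initializer) the point just before $\mathit{guard}$ is evaluated. In the base case I would note that the non-active prefix $\pi$ executes no code, so the first loop-head state is precisely the state obtained by applying $\mathcal{U}$, where the first premiss $\upl\mathcal{U}\upr\mathit{Inv}$ asserts $\mathit{Inv}$. For the step, assume control reaches the loop head in a state $s\models\mathit{Inv}$ and one further iteration runs, i.e.\ the guard is true in $s$ and \java{p} completes normally or abruptly by a matching \javaInl{continue}, returning control to the loop head in a state $s'$. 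Here I would instantiate the \emph{second} premiss in $s$, which is legitimate precisely because that premiss has no hypothesis besides $\mathit{Inv}$ and is therefore valid in every $\mathit{Inv}$-state, in particular in the otherwise arbitrary $s$. Tracing its program from $s$ using the \attCont axioms \eqref{eq:axiom-attempt-normal}, \eqref{eq:axiom-attempt-same-loop}, \eqref{eq:axiom-attempt-halt}: the leading \javaInl{x = true;} is harmless, the guard is true, \java{p} runs exactly as in the loop and, being a normal or matching-continue completion, causes the continuation to run — which sets \java{x} to $\text{FALSE}$, then (in the \javaInl{for} case) runs $\mathit{upd}'$, then halts, aborting the whole fragment in the state $s'$ up to the value of \java{x}. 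That state is therefore a halting final state of the premiss program, so the premiss postcondition holds there; as $\java{x}\doteq\text{FALSE}$ there, we read off $\mathit{Inv}$, and freshness of \java{x} yields $s'\models\mathit{Inv}$, closing the induction.

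To conclude, consider any terminating — or halting — run of the loop: it performs finitely many iterations and then leaves the loop at some loop-head state $s^{\ast}$, which by the induction satisfies $\mathit{Inv}$. Invoking the unrolling equivalences (Theorems~\ref{theorem:correct-unrolling} and \ref{thm:correctness-loop-unrolling-halt}) together with the \javaInl{if}- and \attCont-axioms, the remainder of this run from $s^{\ast}$ coincides, up to the value of \java{x}, with the run of the second premiss's program from $s^{\ast}$: the leading \javaInl{x = true;} is irrelevant, and for each way the loop can be left — guard false, matching \javaInl{break} in \java{p}, or any other abrupt completion of \java{p} — the \attCont axioms force the enclosing \javaInl{if} to complete in exactly the way the loop completes, after which $\omega$ is executed, resp.\ processes the same abruptly completing statement, in both runs; throughout this tail \java{x} keeps the value $\text{TRUE}$, so the premiss postcondition collapses to $\phi$, and freshness of \java{x} gives $\phi$ in the matching final state of the original run. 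Carrying the $\mathit{normal}$ and $\halt$ components along in parallel (Fig.~\ref{fig:axioms-while} and axioms \eqref{eq:axiom-while-halt}, \eqref{eq:axiom-attempt-halt}) establishes the conclusion. For \ruleName{loopInvariantFor} the same argument goes through, using in place of Theorem~\ref{theorem:correct-unrolling} the (entirely parallel) correctness of the \javaInl{for}-loop unrolling behind \ruleName{unwindForLoop}; the only new features are that a normal/continue-completing iteration is followed by $\mathit{upd}$ before the next guard check — mirrored by prepending $\mathit{upd}'$ to the continuation — and that $\mathit{upd}'$ may itself complete abruptly by throwing an exception, which simply lands in the already-treated ``other abrupt completion'' branch with \java{x} still $\text{TRUE}$.

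I expect the main obstacle to be the bookkeeping behind the repeated phrase ``coincides up to \java{x}'': one must track how an abrupt completion emitted by \java{p} (or, in the \javaInl{for} case, by $\mathit{upd}'$) is transformed while it travels through $\pi$ and $\omega$ — which may themselves contain \javaInl{try}-\javaInl{finally} or \attCont statements that catch, reset or re-raise it — and verify that the \attCont axioms make the \javaInl{if}-wrapper in the second premiss reproduce that behaviour faithfully. The halt component adds a second layer, since there the ``preserves invariant'' obligation is discharged in a \emph{halting} rather than a normally reached state, so one has to argue that the continuation's \halt really aborts the entire fragment (including $\pi$ and $\omega$) exactly in the post-iteration state — this is where axioms \eqref{eq:axiom-while-halt}, \eqref{eq:axiom-attempt-halt} and Theorem~\ref{thm:correctness-loop-unrolling-halt} are needed. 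The remaining points — that $\mathit{nse}$ (resp.\ $\mathit{guard}$ versus $\mathit{guard}'$, and the expression list $\mathit{upd}$ versus the statement list $\mathit{upd}'$) is evaluated in matching states with matching side effects in the two programs, and that the \javaInl{for}-loop's first-premiss entry point is genuinely its post-initializer loop head — are routine but easy to get slightly wrong, so I would spell them out rather than leave them implicit.
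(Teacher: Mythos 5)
Your proposal is correct and follows essentially the same route as the paper's own (sketch) proof: an inductive argument in which the first premiss establishes $\mathit{Inv}$ at the initial loop head, the second premiss --- valid in an \emph{arbitrary} $\mathit{Inv}$-state because the context and the update $\upl\mathcal{U}\upr$ are discarded --- does double duty as the preservation step (the halting, $\java{x}\doteq\text{FALSE}$ branch) and the use case (the $\java{x}\doteq\text{TRUE}$ branch), all anchored on the correctness of \attCont-based loop unrolling (Theorems~\ref{theorem:correct-unrolling} and~\ref{thm:correctness-loop-unrolling-halt}). Your version is a more explicit, semantic elaboration of the paper's brief syntactic comparison with \ruleName{unwindForLoop}, and the dependencies you flag --- Conjecture~\ref{conj:equiv-dl} and the $\pi$/$\omega$ bookkeeping for abrupt completion --- are glossed over by the paper's sketch as well.
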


\begin{proof}[Sketch]
  Consider rule \ruleName{loopInvariantFor}. The sequent in the conclusion matches the corresponding one in rule \ruleName{unwindForLoop} which we assume to be sound (see, e.g., Thm.~\ref{thm:correctness-loop-unrolling}). We compare the active statements in the modalities of the premisses of those rules:
    \[
      \begin{array}{lr}
        \javaInl{if (}g'\java{) }\combinedAtt \java{ \{ p \}\ } 
        \javaInl{continuation \{} & \hspace{1cm}\text{(\ruleName{unwindForLoop})} \\
          \qquad\mathit{upd}'\java{ }\highlights{\javaInl{l}^?\javaInl{: for (;\ }g\javaInl{;\ }\mathit{upd}\javaInl{) p}} & \\
        \javaInl{\}} & \\[5pt]
        \highlights{\javaInl{x = true;}} & \hspace{1cm}\text{(\ruleName{loopInvariantFor})} \\
        \javaInl{if (}g'\java{) }\combinedAtt \java{ \{ p \}\ } 
        \javaInl{continuation \{} & \\
          \qquad\mathit{upd}'\java{ }\highlights{\javaInl{x = false;\ }\halt} & \\
        \javaInl{\}} &
      \end{array}
    \]
    The differences between these programs, highlighted in gray, are rather small. Apart from the additional program variable \java{x} used in \ruleName{loopInvariantFor}, the original loop in the \javaInl{continuation} part is replaced by a \emph{halt} statement. The rule \ruleName{loopInvariantFor} therefore ``prunes'' remaining iterations: Where in \ruleName{unwindForLoop} we would continue with more unwinding iterations, we remember that we normally would do this by setting the flag \java{x} to \javaInl{false} and thus prove the invariant formula. For all other cases, we know that the loop is not continued, therefore \java{x} remains \javaInl{true} and we continue with executing the remaining program. Since in the invariant rule, the leading update application $\upl\U\upr$ is removed, we prove the invariant for \emph{an arbitrary iteration}; together with the first premiss asserting that the invariant holds initially, $\sequent{}{\upl\U\upr\mathit{Inv}}$, this forms an \emph{inductive argument} which allows us to abstract the loop by the invariant in the proof cases where \java{x} remains \javaInl{true} and we continue symbolic execution. The argument for \ruleName{loopInvariantWhile} is similar, but simpler.
    \qed
\end{proof}

As can be seen, introducing \javaInl{attempt}-\javaInl{continuation} and halt statements has allowed us to have a loop invariant rule specifically for \javaInl{for} loops, which does not require program transformation of the loop body and only minimal program transformation of the loop update. This allows \javaInl{for} loops to be treated as first-class citizens in proofs and lets user interactions occur on legal program fragments which are still reasonably close to the original program, rather than on those which have been transformed in such a way that it is unclear how they relate to the original program. This increases the transparency of the proof.

%\todo{how about a conjecture that the loop invariant rules (or all symbolic execution rules) are sound}

\subsection{Why No Loop Invariant Rule for \javaInl{do} Loops?}\label{subsec:rules-do}

One could imagine that a similar case could be made to treat \javaInl{do} loops as first-class citizens in proofs, by supplying a loop invariant rule specifically for \javaInl{do} loops. However, this is not really the case.
%The loop invariant for a \javaInl{do} loop need not hold before the loop, as it does for the other loop types, but rather just before the condition is checked. 
As with the other loop types, the loop invariant for a \javaInl{do} loop needs to hold only just before the condition is checked.
However, unlike the other loop types, this is not the case for \javaInl{do} loops until after the first loop iteration.
This makes a loop invariant rule for \javaInl{do} loops actually \emph{less} transparent, than the reasonably simple steps of (1.) converting the \javaInl{do} loop into a \javaInl{while} loop and (2.) applying the loop invariant rule for \javaInl{while} loops on the resulting \javaInl{while} loop. This transformation of a \javaInl{do} loop into a \javaInl{while} loop can happen in one of two ways: (i) by applying the \ruleName{unwindDoLoop} rule to the \javaInl{do} loop and symbolically executing the unrolled body until the \javaInl{attempt}-block is exited and the \javaInl{while} loop in the \javaInl{continuation}-block becomes the active statement, or (ii) by applying the program transformation rule from~\cite{deGouw2019} to the \javaInl{do} loop, producing a \javaInl{while} loop directly without needing to symbolically execute the first loop iteration:
\[\small
\seqRuleW{transformDoToWhile}%
{\sequent{}{\upl \mathcal{U}~||~\java{fst}\upd\text{TRUE} \upr 
\dlbox{\java{l$^?$: }\javaInl{while (fst ||\ }\mathit{nse}\javaInl{) \{ fst = false; p \}\ }\omega}{\phi}}
}%
{\sequent{}{\upl \mathcal{U} \upr \dlbox{\java{$\pi$ l$^?$: }\javaInl{do p while (}\mathit{nse}\java{); }\omega}{\phi}}}
\]
Here \java{fst} is a fresh boolean variable. The loop invariant then applied to the resulting \javaInl{while} loop can use the value of \java{fst} if the invariant of the original \javaInl{do} loop is only established after at least one iteration of the loop has been executed. 

\section{Evaluation}\label{sec:evaluation}

Based on our previous work on providing a loop invariant rule specifically for \javaInl{for} loops using loop scopes~\cite{WasserS19}, Benedikt Dreher implemented this loop invariant in \KeY and evaluated it in~\cite{Dreher19}.
He found that the efficiency of the new rule was similar to the pure program transformation rule and the rule using loop scopes on \javaInl{while} loops produced by program transformation of the \javaInl{for} loop.
The new rule required only about 80\% as many nodes and execution steps as the pure program transformation rule, while creating slightly more branches (creating an average of $27.86$ to $27.5$ branches in the examples).
It was about 10\% less efficient than the rule using loop scopes on \javaInl{while} loops produced by program transformation of the \javaInl{for} loop.
However, the new rule provided more transparency, as it was easier to see in the proof tree which statement in the original \javaInl{for} loop was being processed, as well as seeing directly what the result of applying the loop invariant rule to a \javaInl{for} loop would produce.

The rules proposed in this paper should be slightly more efficient, as they do not require the unnecessary steps of resetting the loop scope index before symbolically executing the \javaInl{for} loop's update and then setting the loop scope index afterwards, as the implemented rule from~\cite{WasserS19} does.
Additionally, the transparency of the rules proposed in this paper should be even greater, as the opacity of the loop scope has been completely replaced with the transparency of \javaInl{attempt}-\javaInl{continuation} and halt statements.

\section{Related Work}\label{sec:related-work}

We have already compared our approach to other JavaDL approaches using program transformation of the loop body or indexed loop scopes, showing that our approach here is much more transparent.
We have also compared this approach to using a dynamic logic with typed modalities for each completion type, which has drawbacks in particular when using symbolic execution.
%
%Other symbolic execution systems are VeriFast~\cite{Jacobs11} and KIV~\cite{Stenzel05}.
%
We unfortunately could not find any work formally explaining the handling of irregular control flow in loops for VeriFast~\cite{Jacobs11}, a symbolic execution system for C and Java; the most formal paper we could find~\cite{Jacobs15} describes only a reduced language without \javaInl{break}s and \javaInl{continue}s.
The symbolic execution calculus for KIV~\cite{Stenzel05} is also a dynamic logic variant.
However, they sequentially decompose (\emph{flatten}) statements, such that a non-active prefix is not needed.
This is accomplished by including both heavy program transformation and tracking of \emph{mode information}, which has similarities to using a dynamic logic with typed modalities for each completion type.
Additionally, their approach cannot deal directly with \javaInl{continue}s%
%(instead transforming these into labeled \javaInl{break}s)
, as they claim that these are problematic for loop unwinding; we have shown that this is not the case with our approach, providing loop unwinding rules for not only \javaInl{while}, but also \javaInl{do} and \javaInl{for} loops.
OpenJML~\cite{Cok14} and other approaches using \emph{verification condition generation} work by translating the program into an intermediate language.
Abrupt completion is usually modelled by branches to basic blocks.
This might make these approaches efficient, but the treatment of \emph{all} loop types becomes completely opaque.
While intermediate languages are less complex (which can be helpful), the translation into them can require compromises concerning soundness~\cite{Flanagan01} and is a non-trivial and error-prone task~\cite{Marche04} in any case.

\section{Conclusion and Future Work}\label{sec:conclusion}

We have introduced \javaInl{attempt}-\javaInl{continuation} and halt statements as extended Java statements that allow more localized reasoning for loops and a way to express immediately halting the Java program.
Axioms for these statements and the appropriately typed modalities have been given in a dynamic logic with modalities for various completion types.
These statements are of particular interest in JavaDL, where we have supplied symbolic execution rules for them. 

We have shown that using \javaInl{attempt}-\javaInl{continuation} statements rather than indexed loop scopes lets us gain great potential:
\begin{enumerate}
 \item We are able to express a loop invariant rule specifically for \javaInl{for} loops which does not require program transformation of the loop body and allows a transparent treatment of \javaInl{for} loops as first-class citizens in proofs.
 \item We are able to express loop unrolling rules for \javaInl{while}, \javaInl{do} and \javaInl{for} loops which require neither program transformation of the loop body, nor the use of nested modalities.
 \item The rule for a \javaInl{continue} reaching the \javaInl{attempt}-block (the non-active prefix responsible for loop bodies) is more transparent than the corresponding rule for loop scopes, simply executing the continuation (whatever it may be), rather than opaquely setting the loop scope index to~false.
\end{enumerate}
As future work we will implement these ideas into \KeY, performing an evaluation of the loop invariant rules for \javaInl{while} and \javaInl{for} loops with this approach on the examples tested in~\cite{SteinhofelW17} and~\cite{Dreher19}, so as to compare them with the loop scope approach.
We would also like to evaluate the new loop unrolling rules and are looking to find an appropriate benchmark for that.

Additionally, we will look into adding a \emph{halts} clause to 
%JML$^*$~\cite{Bruns?} (the JML~\cite{JML} extension used in \KeY)
JML~\cite{JML} method contracts, in order to express what must hold if a method executes the halt statement.
While no Java method can syntactically contain the halt statement, the Java virtual machine does provide the effect of halting, with the methods \javaInl{Runtime.exit()} and \javaInl{System.exit()}~\cite[Chapter~12.8]{JLS}.
Providing a way to express halting in a method contract is therefore somewhat of interest. 

\section*{Acknowledgements}

We thank Benedikt Dreher for his implementation and evaluation of our previous attempt at using indexed loop scopes to create a loop invariant rule specifically for \javaInl{for} loops.
We thank Richard Bubel for the fruitful discussions leading to and during the writing of this paper.

\bibliographystyle{splncs04}
\bibliography{main}

\begin{thebibliography}{10}
\providecommand{\url}[1]{\texttt{#1}}
\providecommand{\urlprefix}{URL }
\providecommand{\doi}[1]{https://doi.org/#1}

\bibitem{KeYBook2016}
Ahrendt, W., Beckert, B., Bubel, R., H\"{a}hnle, R., Schmitt, P.H., Ulbrich, M.
  (eds.): Deductive Software Verification - The KeY Book - From Theory to
  Practice, LNCS, vol. 10001. Springer (2016)

\bibitem{BeckertHS07}
Beckert, B., H{\"{a}}hnle, R., Schmitt, P.H. (eds.): Verification of
  Object-Oriented Software. The KeY Approach, LNCS, vol.~4334. Springer (2007)

\bibitem{JML}
Chalin, P., Kiniry, J.R., Leavens, G.T., Poll, E.: {Beyond Assertions: Advanced
  Specification and Verification with JML and ESC/Java2}. In: de~Boer, F.S.,
  Bonsangue, M.M., Graf, S., de~Roever, W.P. (eds.) Formal Methods for
  Components and Objects. pp. 342--363. Springer (2006)

\bibitem{Cok14}
Cok, D.R.: {OpenJML: Software Verification for Java 7 Using JML, OpenJDK, and
  Eclipse}. In: Proc.\ 1st Workshop on Formal Integrated Development
  Environment. pp. 79--92 (2014)

\bibitem{Dreher19}
Dreher, B.: {Transparent Treatment of Loops in JavaDL}. {B.Sc.} thesis,
  Darmstadt University of Technology, Germany (2019)

\bibitem{Filliatre11}
Filli{\^a}tre, J.C.: {Deductive Software Verification}. International Journal
  on Software Tools for Technology Transfer  \textbf{13}(5), ~397 (Aug 2011)

\bibitem{Flanagan01}
Flanagan, C., Flanagan, C., Saxe, J.B.: {Avoiding Exponential Explosion:
  Generating Compact Verification Conditions}. In: Proc.\ 28th ACM
  SIGPLAN-SIGACT Symp.\ on POPL. pp. 193--205. ACM (2001)

\bibitem{JLS}
Gosling, J., Joy, B., Steele, G.L.: The Java Language Specification.
  Addison-Wesley Longman Publishing Co., Inc., Boston, MA, USA, 1st edn. (1996)

\bibitem{deGouw2019}
de~Gouw, S., de~Boer, F.S., Bubel, R., H{\"a}hnle, R., Rot, J., Steinh{\"o}fel,
  D.: {Verifying OpenJDK's Sort Method for Generic Collections}. Journal of
  Automated Reasoning  \textbf{62}(1),  93--126 (Jan 2019)

\bibitem{Harel00}
Harel, D., Tiuryn, J., Kozen, D.: Dynamic Logic. MIT Press (2000)

\bibitem{Jacobs11}
Jacobs, B., Smans, J., Philippaerts, P., Vogels, F., Penninckx, W., Piessens,
  F.: {VeriFast: A Powerful, Sound, Predictable, Fast Verifier for C and Java}.
  In: Proc.\ 3rd Intern.\ Conf.\ on NASA Formal Methods. pp. 41--55. Springer
  (2011)

\bibitem{Jacobs15}
Jacobs, B., Vogels, F., Piessens, F.: {Featherweight VeriFast}. {Logical
  Methods in Computer Science}  \textbf{{Volume 11, Issue 3}} (Sep 2015)

\bibitem{King76}
King, J.C.: {Symbolic Execution and Program Testing}. Commun. ACM
  \textbf{19}(7),  385--394 (Jul 1976)

\bibitem{Marche04}
Marché, C., Paulin-Mohring, C., Urbain, X.: {The KRAKATOA Tool for
  Certification of JAVA/JAVACARD Programs Annotated in JML}. Journal of Logic
  and Algebraic Programming  \textbf{58},  89--106 (03 2004)

\bibitem{Schlager07}
Schlager, S.: {Symbolic Execution as a Framework for Deductive Verification of
  Object-Oriented Programs}. Ph.D. thesis, Karlsruhe Institute of Technology
  (2007)

\bibitem{SteinhofelW17}
Steinh{\"{o}}fel, D., Wasser, N.: {A New Invariant Rule for the Analysis of
  Loops with Non-standard Control Flows}. In: Proc.\ 13th Intern.\ Conf.\ on
  Integrated Formal Methods {IFM}. pp. 279--294 (2017)

\bibitem{Stenzel05}
Stenzel, K.: {Verification of Java Card Programs}. Ph.D. thesis,
  Universit{\"a}t Augsburg (2005)

\bibitem{Wasser16}
Wasser, N.: {Automatic Generation of Specifications Using Verification Tools}.
  Ph.D. thesis, Darmstadt University of Technology, Germany (2016)

\bibitem{WasserS19}
Wasser, N., Steinh{\"{o}}fel, D.: {Using Loop Scopes with \java{for}-Loops}.
  Tech. rep., Darmstadt University of Technology, Germany (2019),
  \url{https://arxiv.org/abs/1901.06839}

\end{thebibliography}

%\newpage
\appendix

\section*{Appendix}

\subsection*{Proofs for the Theorems}

\setcounter{theorem}{0}

\begin{theorem}[Correctness of loop unrolling]
$\small\dlbox{\javaInl{l: while (e) st}}{_{t}~\phi}$ is equivalent to $\small\dlbox{\javaInl{if (e)\ }\attempt{\javaInl{st}}{\javaInl{l: while (e) st}}}{_{t}~\phi}$ for all completion types $t \in \completionTypes$.
\begin{proof}
$\small\dlbox{\javaInl{if (e)\ }\attempt{\javaInl{st}}{\javaInl{l: while (e) st}}}{_{t}~\phi}$ expands to: $\small\dlbox{\javaInl{if (e)\ }\attempt{\javaInl{st}}{\javaInl{l: while (e) st}} \javaInl{else ;}}_{t}~\phi$.

As $\abrupt = \{\mathit{break}, \mathit{continue}, \mathit{break}_{\java{l}}, \mathit{continue}_{\java{l}}\} \cup \bigcup_{k \in \labels \setminus \{\java{l}\}}\{\mathit{break}_k, \mathit{continue}_k\}$
and $\completionTypes = \normal \cup \abrupt$, by case distinction:
\begin{description}
 \item[If $t \in \normal$:]
 {\small
 \begin{align*}
  &\dlbox{\javaInl{if (e)\ }\attempt{\javaInl{st}}{\javaInl{l: while (e) st}}\javaInl{\ else ;}}_{\normal}~\phi \\
  \equiv~& \dlboxf{b = e;}_\normal \\ &\quad((\java{b} \rightarrow \dlboxf{\attempt{\javaInl{st}}{\javaInl{l: while (e) st}}}_{\normal}~\phi) \\
  &\quad~\land~(\neg\java{b} \rightarrow \dlboxf{;}_{\normal}~\phi)) \tag*{by \eqref{eq:axiom-if}} \\
  \equiv~& \dlboxf{b = e;}_\normal ((\java{b} \rightarrow (\dlboxf{st}_{\continues{l}}\dlbox{\javaInl{l: while (e) st}}_{\normal}~\phi ~\land~ \dlboxf{st}_{\breaks{l}}~\phi))~\land~(\neg\java{b} \rightarrow \dlboxf{;}_{\normal}~\phi)) \tag*{by \eqref{eq:axiom-attempt-normal}} \\
  \equiv~& \dlboxf{b = e;}_\normal ((\java{b} \rightarrow (\dlboxf{st}_{\continues{l}}\dlbox{\javaInl{l: while (e) st}}_{\normal}~\phi ~\land~ \dlboxf{st}_{\breaks{l}}~\phi))~\land~(\neg\java{b} \rightarrow \phi)) \tag*{by \eqref{eq:axiom-skip-normal}} \\
  \equiv~& \dlboxf{b = e;}_\normal ((\neg\java{b} \rightarrow \phi)~\land~(\java{b} \rightarrow(\dlboxf{st}_{\breaks{l}}~\phi~\land~\dlboxf{st}_{\continues{l}}\dlbox{\javaInl{l: while (e) st}}_{\normal}~\phi))) \tag*{by \emph{commutativity} of $\land$} \\
  \equiv~& \dlbox{\javaInl{l: while (e) st}}{_{\normal}~\phi} \tag*{by \eqref{eq:axiom-while-normal} \qed}
 \end{align*}%
 }%
 \item[If $t \in \{\mathit{break}, \mathit{continue}, \mathit{break}_{\java{l}}, \mathit{continue}_{\java{l}}\}$:]
 {\small
 \begin{align*}
  &\dlbox{\javaInl{if (e)\ }\attempt{\javaInl{st}}{\javaInl{l: while (e) st}}\javaInl{\ else ;}}_{t}~\phi \\
  \equiv~& \dlboxf{b = e;}_\normal \\ &\quad((\java{b} \rightarrow \dlboxf{\attempt{\javaInl{st}}{\javaInl{l: while (e) st}}}_{t}~\phi) \\
  &\quad~\land~(\neg\java{b} \rightarrow \dlboxf{;}_t~\phi)) \tag*{by \eqref{eq:axiom-if}} \\
  \equiv~& \dlboxf{b = e;}_\normal((\java{b} \rightarrow \dlboxf{st}_{\continues{l}}\dlbox{\javaInl{l: while (e) st}}_{t}~\phi)~\land~(\neg\java{b} \rightarrow \dlboxf{;}_t~\phi)) \tag*{by \eqref{eq:axiom-attempt-same-loop}} \\
  \equiv~& \dlboxf{b = e;}_\normal((\java{b} \rightarrow \dlboxf{st}_{\continues{l}}\dlbox{\javaInl{l: while (e) st}}_{t}~\phi)~\land~(\neg\java{b} \rightarrow \ftrue)) \tag*{by \eqref{eq:axiom-skip-abrupt}} \\
  \equiv~& \dlboxf{b = e;}_\normal(\java{b} \rightarrow \dlboxf{st}_{\continues{l}}\dlbox{\javaInl{l: while (e) st}}_{t}~\phi) \tag*{by definition of $\rightarrow$ and $\land$} \\
  \equiv~& \dlboxf{b = e;}_\normal(\java{b} \rightarrow \dlboxf{st}_{\continues{l}}~\ftrue) \tag*{by \eqref{eq:axiom-while-same-loop}} \\
  \equiv~& \dlboxf{b = e;}_\normal(\java{b} \rightarrow \ftrue) \tag*{by \emph{necessitation}} \\
  \equiv~& \dlboxf{b = e;}_\normal~\ftrue \tag*{by definition of $\rightarrow$} \\
  \equiv~& \ftrue \tag*{by \emph{necessitation}} \\
 \equiv~& \dlbox{\javaInl{l: while (e) st}}{_{t}~\phi} \tag*{by \eqref{eq:axiom-while-same-loop} \qed}
 \end{align*}%
 }%
 \item[Otherwise, $t \in \bigcup_{k \in \labels \setminus \{\java{l}\}}\{\mathit{break}_k, \mathit{continue}_k\}$:]
 {\small
 \begin{align*}
  &\dlbox{\javaInl{if (e)\ }\attempt{\javaInl{st}}{\javaInl{l: while (e) st}}\javaInl{\ else ;}}_{t}~\phi \\
  \equiv~& \dlboxf{b = e;}_\normal \\ &\quad((\java{b} \rightarrow \dlbox{\attempt{\javaInl{st}}{\javaInl{l: while (e) st}}}_{t}~\phi) \\
  &\quad~\land~(\neg\java{b} \rightarrow \dlboxf{;}_t~\phi)) \tag*{by \eqref{eq:axiom-if}} \\
  \equiv~& \dlboxf{b = e;}_\normal ((\java{b} \rightarrow (\dlboxf{st}_{\continues{l}}\dlbox{\javaInl{l: while (e) st}}_{t}~\phi~\land~\dlboxf{st}_{t}~\phi))~\land~(\neg\java{b} \rightarrow \dlboxf{;}_t~\phi))  \tag*{by \eqref{eq:axiom-attempt-other-loop}} \\
  \equiv~& \dlboxf{b = e;}_\normal ((\java{b} \rightarrow (\dlboxf{st}_{\continues{l}}\dlbox{\javaInl{l: while (e) st}}_{t}~\phi~\land~\dlboxf{st}_{t}~\phi))~\land~(\neg\java{b} \rightarrow \ftrue))  \tag*{by \eqref{eq:axiom-skip-abrupt}} \\
  \equiv~& \dlboxf{b = e;}_\normal (\java{b} \rightarrow (\dlboxf{st}_{\continues{l}}\dlbox{\javaInl{l: while (e) st}}_{t}~\phi~\land~\dlboxf{st}_{t}~\phi)) \tag*{by definition of $\rightarrow$ and $\land$} \\
  \equiv~& \dlboxf{b = e;}_\normal (\java{b} \rightarrow (\dlboxf{st}_{t}~\phi~\land~\dlboxf{st}_{\continues{l}}\dlbox{\javaInl{l: while (e) st}}_{t}~\phi)) \tag*{by \emph{commutativity} of $\land$} \\
 \equiv~& \dlbox{\javaInl{l: while (e) st}}{_{t}~\phi} \tag*{by \eqref{eq:axiom-while-other-loop} \qed}
 \end{align*}%
 }%
\end{description}
\end{proof}
\end{theorem}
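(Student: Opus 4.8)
The plan is to establish the equivalence by purely axiomatic unfolding of the right-hand side, reducing it to the left-hand side via the \javaInl{while} axioms, with a case distinction on the completion type $t$. First I would rewrite the abbreviation ``\javaInl{if (e) st}'' as its full form ``\javaInl{if (e) st else ;}''. Then I would observe that, for the fixed loop label \java{l}, the set $\completionTypes = \normal \cup \abrupt$ is partitioned disjointly and exhaustively into exactly three classes: the normal type $\normal$; the \emph{matching} abrupt types $\{\mathit{break}, \mathit{continue}, \mathit{break}_{\java{l}}, \mathit{continue}_{\java{l}}\}$; and the \emph{non-matching} labeled types $\bigcup_{k \in \labels \setminus \{\java{l}\}}\{\mathit{break}_k, \mathit{continue}_k\}$. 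This split is the natural one, since the three \javaInl{while} axioms \eqref{eq:axiom-while-normal}--\eqref{eq:axiom-while-same-loop} and the three \attCont axioms \eqref{eq:axiom-attempt-normal}--\eqref{eq:axiom-attempt-other-loop} are organized along precisely these lines, so each class can be discharged by matching one pair of axioms.

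Within each case the mechanics are routine. I would first apply the \javaInl{if} axiom \eqref{eq:axiom-if} to factor out the guard evaluation ``\javaInl{b = e;}'', producing a \emph{then}-branch governed by the \attCont statement and an \emph{else}-branch governed by the skip statement ``\javaInl{;}''. Next I would rewrite the then-branch using the \attCont axiom appropriate to the class (\eqref{eq:axiom-attempt-normal}, \eqref{eq:axiom-attempt-same-loop}, or \eqref{eq:axiom-attempt-other-loop}), and simplify the else-branch with the skip axioms: in the normal case \eqref{eq:axiom-skip-normal} collapses $\dlboxf{;}{_\normal~\phi}$ to $\phi$, whereas in the abrupt cases \eqref{eq:axiom-skip-abrupt} makes $\dlboxf{;}{_t~\phi}$ vacuously true. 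A final reordering by commutativity of $\land$ then aligns the resulting formula literally with the corresponding \javaInl{while} axiom, closing the normal case through \eqref{eq:axiom-while-normal} and the non-matching case through \eqref{eq:axiom-while-other-loop}.

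The matching-abrupt case is the one I expect to demand the most care, and it is the main (if still modest) obstacle. Here the axiom \eqref{eq:axiom-while-same-loop} states that a \javaInl{while} loop can never complete for a matching \javaInl{break} or \javaInl{continue}, so both the outer loop formula (the left-hand side) and the inner recursive occurrence of the loop must collapse to $\ftrue$. After applying \eqref{eq:axiom-if}, \eqref{eq:axiom-attempt-same-loop} and \eqref{eq:axiom-skip-abrupt}, the right-hand side reduces to $\dlboxf{b = e;}{_\normal(\java{b} \rightarrow \dlboxf{st}{_{\continues{l}} \dlbox{\javaInl{l: while (e) st}}{_t~\phi}})}$. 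I would discharge this by first replacing the innermost loop formula with $\ftrue$ via \eqref{eq:axiom-while-same-loop}, then applying necessitation ($\dlboxf{p}{_S~\ftrue} \equiv \ftrue$) repeatedly to strip the remaining boxes, so that the whole right-hand side reduces to $\ftrue$. Since \eqref{eq:axiom-while-same-loop} makes the left-hand side $\ftrue$ as well, the two sides are equivalent, which closes the case and completes the proof.
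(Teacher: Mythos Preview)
Your proposal is correct and follows essentially the same approach as the paper: the same three-way case split on $t$, the same sequence of axiom applications (\eqref{eq:axiom-if}, then the appropriate \attCont axiom, then the skip axioms, then commutativity and the matching \javaInl{while} axiom), and the same handling of the matching-abrupt case via \eqref{eq:axiom-while-same-loop} on the inner loop followed by necessitation to collapse everything to $\ftrue$.
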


\begin{theorem}[Correctness of loop unrolling in the halt modalities]
The formulae $\small\dlbox{\javaInl{if (e)\ }\attempt{\javaInl{st}}{\javaInl{l: while (e) st}}}{_{\halt}~\phi}$ and $\small\dlbox{\javaInl{l: while (e) st}}{_{\halt}~\phi}$ are equivalent.
\begin{proof}
 {\small
 \begin{align*}
  &\dlboxf{if (e) \attempt{\javaInl{st}}{\javaInl{l: while (e) st}}\javaInl{\ else ;}}{_{\halt}~\phi} \\
  ~\equiv~& \dlboxf{b = e;}{_{\halt}\phi} ~~\land~ \\
  &\dlboxf{b = e;}_{\normal}((\java{b} \rightarrow \dlboxf{\attempt{\javaInl{st}}{\javaInl{l: while (e) st}}}{_{\halt}\phi}) \\
  &\qquad\qquad\quad\land~(\neg\java{b} \rightarrow \dlboxf{;}_{\halt}\phi)) \tag*{by~\eqref{eq:axiom-if-halt}} \\
  ~\equiv~& \dlboxf{b = e;}{_{\halt}\phi} ~\land~ \dlboxf{b = e;}_{\normal}((\java{b} \rightarrow (\dlboxf{st}{_{\halt}\phi} ~\land~ \dlboxf{st}_{\continues{l}}\dlbox{\javaInl{l: while (e) st}}_{\halt}\phi)) \\
  &\qquad\qquad\qquad\qquad\qquad\quad~\land~(\neg\java{b} \rightarrow \dlboxf{;}_{\halt}\phi)) \tag*{by~\eqref{eq:axiom-attempt-halt}} \\
  ~\equiv~& \dlboxf{b = e;}{_{\halt}\phi} ~\land~ \dlboxf{b = e;}_{\normal}((\java{b} \rightarrow (\dlboxf{st}{_{\halt}\phi} ~\land~ \dlboxf{st}_{\continues{l}}\dlbox{\javaInl{l: while (e) st}}_{\halt}\phi)) \\
  &\qquad\qquad\qquad\qquad\qquad\quad~\land~(\neg\java{b} \rightarrow \ftrue)) \tag*{by~\eqref{eq:axiom-skip-halt}} \\
  ~\equiv~& \dlboxf{b = e;}{_{\halt}\phi} ~\land~ \dlboxf{b = e;}_{\normal}(\java{b} \rightarrow (\dlboxf{st}{_{\halt}\phi} ~\land~ \dlboxf{st}_{\continues{l}}\dlbox{\javaInl{l: while (e) st}}_{\halt}\phi)) \tag*{by definition of $\rightarrow$ and $\land$} \\
 ~\equiv~& \dlbox{\javaInl{l: while (e) st}}{_{\halt}~\phi} \tag*{by~\eqref{eq:axiom-while-halt} \qed}
 \end{align*}%
 }%
\end{proof}
\end{theorem}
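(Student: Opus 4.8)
The plan is to prove the equivalence by a short chain of purely equational rewrites, transforming the left-hand formula $\dlbox{\javaInl{if (e)\ }\attempt{\javaInl{st}}{\javaInl{l: while (e) st}}}{_{\halt}~\phi}$ into the right-hand side of the while-halt axiom~\eqref{eq:axiom-while-halt}, whose own left-hand side is exactly the target formula $\dlbox{\javaInl{l: while (e) st}}{_{\halt}~\phi}$. The whole argument stays inside the halt modality and uses only the axioms collected in Fig.~\ref{fig:axioms-halt}; structurally it parallels the completion-type proof of Theorem~\ref{thm:correctness-loop-unrolling}, but it is markedly shorter because the halt modality admits no case split on the completion type.

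First I would make the implicit empty \javaInl{else} branch explicit, as licensed by the paper's convention that \javaInl{if (e) st} abbreviates \javaInl{if (e) st else ;}. Then I would apply the if-halt axiom~\eqref{eq:axiom-if-halt}, which separates the formula into a halt-during-guard-evaluation conjunct $\dlboxf{b = e;}_{\halt}\phi$ and a normal-guard-evaluation conjunct of the form $\dlboxf{b = e;}_{\normal}((\java{b} \rightarrow \dots) \land (\neg\java{b} \rightarrow \dlboxf{;}_{\halt}\phi))$. Next I would rewrite the \java{b}-true branch with the attempt-halt axiom~\eqref{eq:axiom-attempt-halt}, replacing the attempt subformula by $\dlboxf{st}_{\halt}\phi~\land~ \dlboxf{st}_{\continues{l}}\dlbox{l\javaInl{: while (e) st}}_{\halt}~\phi$. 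Finally I would discharge the \java{b}-false branch: because \emph{skip} can never halt, axiom~\eqref{eq:axiom-skip-halt} makes $\dlboxf{;}_{\halt}\phi$ valid, so $(\neg\java{b} \rightarrow \ftrue)$ collapses and drops out of the conjunction, leaving precisely the right-hand side of~\eqref{eq:axiom-while-halt}.

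What makes this proof easier than its completion-type counterpart is that the attempt-halt axiom~\eqref{eq:axiom-attempt-halt} carries a single uniform shape, $\dlboxf{p}{_{\halt}\phi} \land \dlboxf{p}{_{\continues{l}}\dlboxf{q}{_{\halt}\phi}}$, with no separate \breaks{l} conjunct and no dependence on which abrupt completion type is being tracked; consequently there is no three-way case distinction to mirror, unlike in Theorem~\ref{thm:correctness-loop-unrolling}, and the whole argument is a single computation rather than three parallel ones. The step I expect to require the most care is therefore conceptual rather than symbolic: confirming that axiom~\eqref{eq:axiom-attempt-halt} faithfully models the intended semantics, namely that the attempt statement halts exactly when the body \javaInl{st} halts, or when \javaInl{st} completes normally or by a matching \javaInl{continue} and the re-entered loop then halts — while a matching \javaInl{break}, a non-matching labeled jump, a \javaInl{throw} or a \javaInl{return} contribute nothing, since none of these is a halting completion. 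Once that axiom is granted, its recursion aligns term-for-term with the loop unfolding in~\eqref{eq:axiom-while-halt}, and the chain of equivalences closes with no residual obligations.
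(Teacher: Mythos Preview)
Your proposal is correct and follows essentially the same approach as the paper: expand the implicit \javaInl{else}, then apply axioms~\eqref{eq:axiom-if-halt}, \eqref{eq:axiom-attempt-halt}, \eqref{eq:axiom-skip-halt}, simplify the trivial $(\neg\java{b}\rightarrow\ftrue)$ conjunct, and close with~\eqref{eq:axiom-while-halt}. Your observation that no case split on the completion type is needed---because~\eqref{eq:axiom-attempt-halt} has a single uniform shape---is exactly why the paper's proof here is a single chain rather than the three-case argument of Theorem~\ref{thm:correctness-loop-unrolling}.
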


\end{document}